\DeclareSIUnit{\belmilliwatt}{Bm}
\DeclareSIUnit{\dBm}{\deci\belmilliwatt}
\newacronym{swipt}{SWIPT}{simultaneous wireless information and power transfer}
\newacronym{slipt}{SLIPT}{simultaneous lightwave information and power transfer}
\newacronym{wpt}{WPT}{wireless power transfer}
\newacronym{wpcn}{WPCN}{wireless powered communication network}
\newacronym{wit}{WIT}{wireless information transfer}
\newacronym{awgn}{AWGN}{additive white Gaussian noise}
\newacronym{tx}{TX}{transmitter}
\newacronym{rx}{RX}{receiver}
\newacronym{bs}{BS}{base station}
\newacronym{ir}{IR}{information receiver}
\newacronym{eh}{EH}{energy harvesting}
\newacronym{id}{ID}{information detection}
\newacronym{irs}{IRS}{intelligent reflected surface}
\newacronym{ap}{AP}{average power}
\newacronym{pp}{PP}{peak power}
\newacronym{aoa}{AOA}{angle-of-arrival}
\newacronym{aod}{AOD}{angle-of-departure}
\newacronym{eec}{EEC}{electical equivalent circuit}
\newacronym{fso}{FSO}{free space optical}
\newacronym{vlc}{VLC}{visible light communication}
\newacronym{led}{LED}{light emitting diode}
\newacronym{ook}{OOK}{On-Off keying}
\newacronym{pam}{PAM}{pulse amplitude modulated}
\newacronym{tdma}{TDMA}{time division multiple access}
\newacronym{ml}{ML}{maximum likelihood}
\newacronym{siso}{SISO}{single-input single-output}
\newacronym{mimo}{MIMO}{multiple-input multiple-output}
\newacronym{miso}{MISO}{multiple-input single-output}
\newacronym{simo}{SIMO}{single-input multiple-output}
\newacronym{rf}{RF}{radio frequency}
\newacronym{dc}{DC}{direct current}
\newacronym{ac}{AC}{alternating current}
\newacronym{papr}{PAPR}{peak-to-average power ratio}
\newacronym{lp}{LPF}{low-pass filter}
\newacronym{mc}{MC}{matching circuit}
\newacronym{mrt}{MRT}{maximum ratio transmission}
\newacronym{ecb}{ECB}{equivalent complex baseband}
\newacronym{tdd}{TDD}{time-division-duplex}
\newacronym{zf}{ZF}{zero forcing}
\newacronym{snr}{SNR}{signal-to-noise ratio}
\newacronym{sinr}{SINR}{signal-to-interference-plus-noise ratio}
\newacronym{rv}{RV}{random variable}
\newacronym{iid}{i.i.d.}{independent and identically distributed}
\newacronym{Pdf}{Pdf}{Probability density function}
\newacronym{pdf}{pdf}{probability density function}
\newacronym{cdf}{cdf}{cumulative distribution function}
\newacronym{dnn}{DNN}{dense neural networks}
\newacronym{mdp}{MDP}{Markov decision process}
\newacronym{sca}{SCA}{successive convex approximation}
\newacronym{sdr}{SDR}{semi-definite relaxation}
\newacronym{spr}{LP}{low power}
\newacronym{mpr}{MP}{medium power}
\newacronym{lpr}{HP}{high power}
\begin{document}

	\newtheorem{proposition}{Proposition}	
	\newtheorem{lemma}{Lemma}	
	\newtheorem{corollary}{Corollary}
	\newtheorem{assumption}{Assumption}	
	\newtheorem{remark}{Remark}	
	
	\title{Accurate EH Modelling and Achievable Information Rate for SLIPT Systems with Multi-Junction Photovoltaic Receivers}

	\author{Nikita Shanin,
			Hedieh Ajam,
			Vasilis K. Papanikolaou,
			Laura Cottatellucci,
			and Robert Schober}

	\maketitle

\begin{abstract}
	\let\thefootnote\relax\footnotetext{This paper was accepted in part for the presentation at the IEEE Global Communications Conference, 2023 \cite{Shanin2023c}.

This work was supported in part by the German Science Foundation through project SFB 1483 - Project-ID 442419336, EmpkinS.

Nikita Shanin, Hedieh Ajam, Vasilis K. Papanikolaou, Laura Cottatellucci, and Robert Schober are with the Institute for Digital Communications, Friedrich-Alexander-Universit{\"a}t (FAU) Erlangen-N{\"u}rnberg, 91058 Erlangen, Germany (e-mail: nikita.shanin@fau.de; hedieh.ajam@fau.de; vasilis.papanikolaou@fau.de; laura.cottatellucci@fau.de; robert.schober@fau.de).
}
In this paper, we study optical simultaneous lightwave information and power transfer (SLIPT) systems employing photovoltaic optical receivers (RXs).
To be able to efficiently harvest energy from broadband light, we propose to employ multi-junction photovoltaic cells at the optical RX.
We consider the case, where the optical RX is illuminated by ambient light, an intensity-modulated information free space optical (FSO) signal, and since the ambient light may not be always present, e.g., indoor or at night, a dedicated energy-providing broadband optical signal.
Based on the analysis of the equivalent electrical circuit of the multi-junction photovoltaic RX, we model the current flow through the photovoltaic cell and derive a novel accurate and two novel approximate energy harvesting (EH) models for the two cases, where the optical RX is equipped with a single and multiple p-n junctions, respectively. 
Taking into account the non-linear behavior of the photovoltaic RXs, we derive the optimal distribution of the transmit information signal that maximizes the achievable information rate.
Furthermore, for a practical On-Off keying (OOK) modulated FSO information signal, we obtain the bit-error rate at the RX.
We validate the proposed EH models by circuit simulations and show that the photovoltaic RXs saturate for high received signal powers.
For single-junction RXs, we compare the proposed EH model with two well-known baseline models and demonstrate that, in contrast to the proposed EH model, they are not able to fully capture the RX non-linearity. 
Moreover, since multi-junction RXs allow a more efficient allocation of the optical power, they are more robust against saturation, and thus, are able to harvest significantly more power and achieve higher data rates than RXs employing a single p-n junction
Finally, we highlight a tradeoff between the information rate and harvested power in SLIPT systems and demonstrate that the proposed optimal distribution yields significantly higher achievable information rates compared to uniformly distributed transmit signals, which are optimal for linear optical information RXs.

\begin{IEEEkeywords}
	SLIPT, photovoltaics, non-linear energy harvesting, optical wireless communication, optical receiver design.
\end{IEEEkeywords}

\end{abstract}
\section{Introduction}
\label{Section:Introduction}
Free space optical (FSO) communication systems support extremely high data rates by focusing an intensity-modulated laser signal on a small optical \gls*{rx} \cite{Zhang2019, Ajam2020}.
Since practical RXs of FSO systems convert the received optical intensity-modulated signal into an electrical amplitude-modulated signal employing light-sensitive devices, e.g., photodetectors and photovoltaic cells, they are also able to harvest energy from the received signal \cite{Luque2010, Mertens2014}.
This aspect has recently fueled interest in \gls*{slipt} systems, where the laser signal is exploited not only to convey information, but also to deliver power to user devices \cite{MaShuai2019, Li2017, Pan2019, Diamantoulakis2018, Tran2019, Sepehrvand2021, Papanikolaou2021, Zhang2018, Makki2018, Wang2015, Fakidis2020}.

SLIPT systems employing RXs equipped with photoelectric circuits for \gls*{eh} and information decoding were considered in \cite{MaShuai2019, Li2017, Pan2019, Diamantoulakis2018, Tran2019, Sepehrvand2021}.
In these systems, the \gls*{tx} was equipped with light emitting diodes (LEDs) to cover a large area, provide indoor illumination, and simultaneously deliver data and energy to user devices.
Since photovoltaic cells and photodiodes are preferable for EH and information reception, respectively, the authors in \cite{MaShuai2019} proposed to deploy both types of optoelectronic devices at the \gls*{rx} for SLIPT.
However, since the surface area of practical user equipments may be constrained to be small, it may not be possible to mount multiple optoelectronic devices on the same RX.
Therefore, in \cite{Li2017, Pan2019, Diamantoulakis2018, Tran2019, Sepehrvand2021}, the authors studied SLIPT systems, where the RXs were equipped with single photovoltaic cells.
In particular, the SLIPT system in \cite{Li2017} comprised multiple users that were served via a \gls*{tdma} protocol, and in each time slot, only one RX decoded information while the other RXs harvested power from the received optical signal.
However, transmission protocols based on time-sharing between information and power transfer are not optimal for SLIPT \cite{Pan2019}.
Therefore, the user equipments in \cite{Diamantoulakis2018, Tran2019, Sepehrvand2021} were designed to receive both power and information in the same time slot.
To this end, the authors in \cite{Diamantoulakis2018} proposed to separate the \gls*{dc} and \gls*{ac} components of the received signal and utilize them for EH and information decoding at the RX, respectively.
Employing the RX design in \cite{Diamantoulakis2018}, the authors of \cite{Tran2019} and  \cite{Sepehrvand2021} studied resource allocation for SLIPT and hybrid radio frequency (RF)-optical simultaneous wireless information and power transfer (SWIPT) systems, respectively.
Their analysis highlighted a tradeoff between the achievable information rate and the harvested power at the RXs, which was characterized by a rate-power region.

Since the transmit power of optical light sources is typically limited due to hardware constraints and eye-safety regulations \cite{Lasersafety2014}, the power received at a small photovoltaic RX is typically rather low if undirected LEDs are employed at the TX.
Therefore, the authors in \cite{Makki2018, Papanikolaou2021, Zhang2018, Wang2015, Fakidis2020} proposed to exploit a directed FSO laser beam to transfer more power to user devices.
In particular, the authors of \cite{Papanikolaou2021} and \cite{Zhang2018} analyzed the impact of imperfect channel knowledge at the TX on the efficiency of FSO wireless power transfer.
Furthermore, for FSO SLIPT systems, the authors in \cite{Makki2018} optimized the resource allocation policy and determined the total information throughput.

For the SLIPT systems in \cite{MaShuai2019, Li2017, Diamantoulakis2018, Pan2019, Tran2019, Sepehrvand2021, Papanikolaou2021, Makki2018, Zhang2018}, the EH was designed by tracking the maximum power point (MPP) at the optical photovoltaic cell.
In other words, the load parameters at the RXs in \cite{MaShuai2019, Li2017, Diamantoulakis2018, Pan2019, Tran2019, Sepehrvand2021, Papanikolaou2021, Makki2018, Zhang2018} were tuned to maximize the EH efficiency for a given received signal power, which, in general, requires RX loads with variable impedances, and thus, is not feasible for the design of a practical RX, where the electrical properties of the load device are typically fixed \cite{Horowitz1989}.
To avoid this impractical assumption, the authors in \cite{Wang2015} considered optical RXs with fixed load resistances, derived a non-linear EH model characterizing the harvested power as a function of the received power, and studied the frequency response of the optical RX for the received information signal and the noise impairing the information decoding.
Furthermore, based on the non-linear EH model and the RX design in \cite{Wang2015}, the authors in \cite{Fakidis2020} implemented an FSO SLIPT system based on a GaAs photovoltaic cell.
Their analysis confirmed that the non-linearities of photovoltaic cells have to be carefully taken into account for efficient SLIPT system design.

Although the SLIPT system designs in \cite{Wang2015} and \cite{Fakidis2020} were experimentally shown to achieve high data rates, they still relied on several assumptions that limit SLIPT system performance.
First, in \cite{MaShuai2019, Li2017, Diamantoulakis2018, Pan2019, Tran2019, Sepehrvand2021, Papanikolaou2021, Makki2018, Zhang2018, Wang2015, Fakidis2020}, the authors modelled the photovoltaic cell with an \gls*{eec} employing a single non-linear diode.
However, for high power received signals, the electrical current at the output of the photovoltaic cell is mainly determined by the diffusion of electrons and holes in the p-n junction of the cell, while for low power received signals, diffusion is negligible and the photovoltaic current is caused by particle recombination in the depletion region of the junction \cite{Luque2010, Mertens2014}.
Thus, an accurate \gls*{eec} of a photovoltaic RX has to comprise two diodes to model the non-linearities of the output current in both the low and high input power regimes \cite{Mertens2014, Luque2010}.
Furthermore, since practical photovoltaic RXs exhibit a non-linear behavior, the received ambient light has an impact on the performance of EH and information decoding at the RX, whereas it was neglected in the analysis of SLIPT systems in \cite{MaShuai2019, Li2017, Diamantoulakis2018, Pan2019, Tran2019, Sepehrvand2021, Papanikolaou2021, Makki2018, Zhang2018, Wang2015, Fakidis2020}.
Moreover, while the optical RXs in \cite{MaShuai2019, Li2017, Diamantoulakis2018, Pan2019, Tran2019, Sepehrvand2021, Papanikolaou2021, Makki2018, Zhang2018, Wang2015, Fakidis2020} employed a single p-n junction, recent advances in the design of photovoltaic cells have provided significant improvements in EH efficiency by employing multiple semiconducting elements, and thus, multiple p-n junctions \cite{Dimroth2016, Geisz2020}.
In fact, broadband multi-junction photovoltaic cells yield significantly higher EH efficiencies than single-junction cells when illuminated by a broadband optical signal, e.g., sunlight or ambient indoor illumination.
However, ambient optical light is intermittent in nature and not always available, e.g., at night \cite{Luque2010, Mertens2014}.
Furthermore, we note that for the RX design in \cite{MaShuai2019, Li2017, Diamantoulakis2018, Pan2019, Tran2019, Sepehrvand2021, Papanikolaou2021, Makki2018, Zhang2018, Wang2015, Fakidis2020}, the authors assumed a linear frequency response of the RX circuit with respect to the received information signal, which may not be realistic due to the non-linear behavior of photovoltaic cells \cite{Luque2010, Mertens2014}.
Finally, for practical photovoltaic RXs, optimal filtering of the information signal is not feasible due to the RX non-linearities and the extremely high sampling rates required for accurate discretization of the output information signal \cite{Tietze2012, Horowitz1989}.
In \cite{Shanin2023c}, which is the conference version of this paper, we derived the distribution of the transmitted information-carrying signal that maximizes the achievable information rate in FSO SLIPT systems employing single-junction photovoltaic RXs.
In this paper, taking into account all non-linearities of optical photovoltaic RXs equipped with multiple p-n junctions, we derive an accurate EH model, the optimal input distribution for information transmission, the maximum achievable information rate, and the bit-error rate of SLIPT systems.

In this work, we consider a \gls*{slipt} system, where the optical RX is equipped with a photovoltaic cell.
Since photovoltaic RXs with multiple p-n junctions are efficient for EH from broadband optical signals, we propose to employ a multi-junction photovoltaic cell at the RX. 
We consider the case where the RX is illuminated by ambient light, an intensity-modulated information FSO signal, and since the ambient light may not be present, e.g., indoor or at night, a dedicated energy-providing broadband optical signal.
The optical signal received by the photovoltaic RX is converted to an electrical signal, whose AC and DC components are separated and utilized for information reception and EH, respectively.
The main contributions of this work can be summarized as follows:
\begin{itemize}
	\item Taking into account the non-linearities of photovoltaic RXs with multiple p-n junctions, we thoroughly model the current flow through the RX and derive a novel accurate EH model to characterize the instantaneous harvested power.
	Furthermore, since the accurate EH model may not lend itself to SLIPT system design, we also propose two novel approximate analytical closed-form EH models for photovoltaic RXs equipped with one and multiple p-n junctions, respectively. 
	\item  Since optimal filtering of the electrical information signal is not possible due to the non-linearity of photovoltaic RXs and the high required sampling rates, we design a suboptimal information decoder. 
	Furthermore, taking into account the non-linear behavior of the optical RX, we derive the optimal distribution of the information-carrying transmit signal that maximizes the achievable information rate for SLIPT systems.
	Moreover, for practical \gls*{ook} modulated information signals, we also derive the bit-error rate for maximum likelihood information detection at the photovoltaic RX.
	\item We validate the proposed EH models by circuit simulations and demonstrate that for single-junction RXs, two baseline EH models based on MPP tracking at the RX and a single-diode p-n junction, respectively, are not able to accurately capture the non-linearity of photovoltaic RXs.
	We show that the proposed optimal distribution yields significantly higher achievable information rates compared to uniformly distributed transmit signals, which are optimal for linear optical information RXs.
	Furthermore, we demonstrate that photovoltaic RXs saturate when the received optical power is high and multi-junction cells are able to achieve significantly higher harvested powers and data rates than single-junction RXs.
	Finally, we study the tradeoff between the achievable data rate and the average harvested power at the RX and show that by adjusting the power of the energy-providing optical signal, the achievable data rate of SLIPT systems can be traded for a higher harvested power.
\end{itemize}

The remainder of this paper is organized as follows.
In Section \ref{Section:SysModel}, we discuss the considered system model.
Next, in Section \ref{Section:EH}, we present the \gls*{eec} of the RX and model the EH at multi-junction photovoltaic RXs.
Then, in Section \ref{Section:ID}, we investigate the information reception at the considered optical RXs.
In Section~\ref{Section:NumResults}, we provide numerical results.
Finally, we draw conclusions in Section~\ref{Section:Conclusions}.

{\itshape Notations:} Bold lower case letters $\boldsymbol{x}$ represent vectors and $x_i$ is the $i^\text{th}$ element of $\boldsymbol{x}$.
The sets of real, real non-negative, and natural numbers are represented by $\mathbb{R}$, $\mathbb{R}_{+}$, and $\mathbb{N}$, respectively.
$f(x;y)$ stands for a function of variable $x$ parametrized by $y$.
The inverse, first-order derivative, and domain of the one-dimensional function $f(\cdot)$ are denoted by $f^{-1}(\cdot)$, $f'(\cdot)$, and $\text{dom} \{f\}$, respectively.
Furthermore, $f_s(\cdot)$ and $F_s(\cdot)$ denote the \gls*{pdf} and \gls*{cdf} of random variable $s$, respectively.
$\delta(\cdot)$ is the Dirac delta function.
Finally, $\Pr\{A\}$ and $\mathbb{E}\{s\}$ stand for the probability of event $A$ and statistical expectation of random variable $s$, respectively.

\section{System Model}
\label{Section:SysModel}

\begin{figure}[!t]
	\centering
	\includegraphics[draft=false, width=1\linewidth]{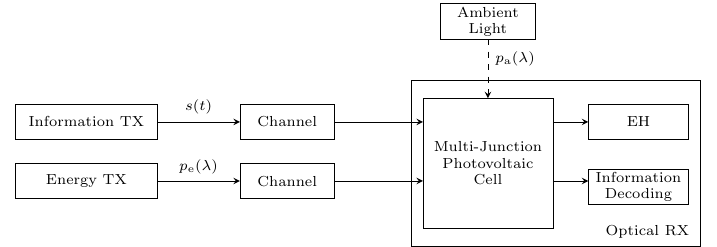}
	\caption{SLIPT system model comprising an information FSO TX, an optical energy TX, an ambient light source, and a broadband multi-junction photovoltaic RX \cite{Wang2015, Fakidis2020, Dimroth2016, Luque2010}.}
	\label{Fig:SystemModel}\vspace*{-5pt}
\end{figure}

We consider a \gls*{slipt} system, where an information TX sends an intensity-modulated FSO\footnotemark\hspace*{0pt} signal focused on an optical RX equipped with a broadband photovoltaic cell comprising $N \in \mathbb{N}$ p-n junctions \cite{Mertens2014, Luque2010, Geisz2020, Dimroth2016}.
\footnotetext{In this paper, as in \cite{Wang2015, Fakidis2020}, to achieve higher data rates, we assume that the optical RX is illuminated with an information-bearing FSO signal. We note that the extension of our analysis to SLIPT systems employing LEDs for information transmission is possible.}
Multi-junction photovoltaic cells are able to efficiently harvest power from broadband optical signals \cite{Mertens2014, Luque2010}.
Here, we assume that the optical RX is illuminated by ambient light, e.g., sunlight or indoor illumination.
Furthermore, since the EH efficiency of multi-junction photovoltaic RXs may be compromised when the received ambient light intensity is low, e.g., at night \cite{Luque2010, Mertens2014}, we additionally assume that the photovoltaic RX is illuminated by a dedicated energy-providing broadband optical signal, as shown in Fig.~\ref{Fig:SystemModel}.

We express the power spectral density of the stochastic process modelling the received light as a function of wavelength $\lambda$ and time $t, t \in [kT, (k+1)T]$, as follows:
\begin{align}
	p_\text{r}(\lambda, t) = h s(t) \delta(\lambda - \lambda_0)  + g(&\lambda) p_\text{e}(\lambda) \nonumber \\
	&+ p_\text{a}(\lambda) + \tilde{w}(\lambda, t).
	\label{Eqn:ReceivedSpectrum}
\end{align} 
Here, $s(t) = s[k] \psi(t-kT)$ is the power of the intensity-modulated information FSO signal with carrier wavelength $\lambda_0$, where $s[k], \forall k \in \mathbb{N},$ is the information symbol transmitted in time slot $k, k\in \mathbb{N},$ and modelled as an \gls*{iid} realization of a scalar non-negative random variable $s$, and $\psi(t)$ is a rectangular pulse of duration $T$, i.e., $\psi(t)$ takes value 1 if $t\in[0, T)$ and $0$, otherwise.
Since the intensity of FSO signals is limited by hardware constraints and eye safety regulations \cite{Lasersafety2014}, we limit the transmit power of the information signal to $A^2$, and thus, the pdf of $s$ satisfies $\text{dom} \{f_{s}\} \subseteq [0, A^2]$.
In (\ref{Eqn:ReceivedSpectrum}), $p_\text{e}(\lambda)$ and $g(\lambda)$ are the static power spectral density of the optical energy-providing signal and the frequency response of the channel between the energy TX and the optical RX, respectively.
Furthermore, the scalar channel gain between the information FSO TX and the optical RX and the power spectral density of the ambient light at the RX are denoted by $h \in \mathbb{R}_{+}$ and $p_\text{a}(\lambda)$, respectively.
We assume that the power spectral densities $p_\text{e}(\lambda)$, $p_\text{a}(\lambda)$ and the channels $h$, $g(\lambda)$ are perfectly known\footnotemark\hspace*{0pt} at the information FSO TX and the optical RX.
\footnotetext{In this work, we assume that the power spectral density of the energy-providing signal is fixed and known at the information TX and the optical RX. We note that the derivation of the optimal energy signal for SLIPT is an interesting direction for future work and is not tackled in this paper.}
Finally, $\tilde{w}(\lambda, t)$ is the time-varying power spectral density of the noise received at the photovoltaic RX.
The input noise at the RX is a non-stationary stochastic process caused by the random fluctuations of the ambient light, the intensity of the optical energy-providing signal, and the laser source \cite{Lapidoth2009}.
Moreover, the output signal at the photovoltaic RX is additionally impaired by thermal and shot noise, which are generated by the resistances and p-n junctions of the RX, respectively \cite{Lapidoth2009, Wang2015}.
For the derivation of the EH model in Section~\ref{Section:EH}, we neglect the noise at the RX since its contribution to the average harvested power is negligible \cite{MaShuai2019, Tran2019, Sepehrvand2021, Wang2015}.
To analyze the performance of information transmission, in Section~\ref{Section:ID}, we assume that the thermal noise at the RX dominates and we model the equivalent noise, which impairs the output information symbols, as \gls*{awgn} \cite{Wang2015, Lapidoth2009}.

\section{EH at Multi-Junction Photovoltaic RXs}
\label{Section:EH}
In this section, we study the \gls*{eh} at the multi-junction photovoltaic RX.
To this end, we first present the \gls*{eec} of the RX, where the AC and DC components of the electrical signal at the output of the photovoltaic cell are separated and utilized for information reception and EH, respectively.
Next, we propose a novel accurate model for characterizing the EH at the user device.
Finally, since the derived accurate EH model is not practical for the design of SLIPT systems, we propose two novel approximate analytical closed-form EH models for single-junction and multi-junction photovoltaic RXs, respectively.
\subsection{EEC of Multi-Junction Photovoltaic RXs}
	\label{Section:EEC}

\begin{figure}
	\centering
		\includegraphics[draft=false, width=1\linewidth]{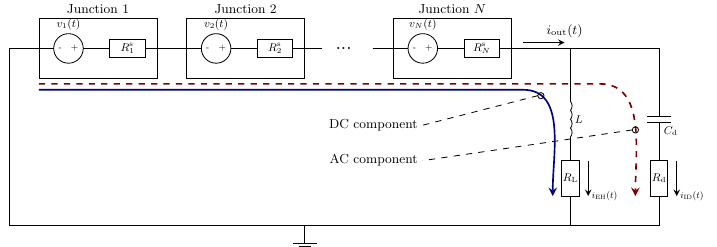}
		\caption{EEC of multi-junction photovoltaic RXs, where the DC (blue solid line) and AC (red dashed line) components of the electrical output signal $i_\text{out}(t)$ are separated and utilized for \gls*{eh} and information decoding, respectively \cite{Mertens2014, Luque2010, Horowitz1989}.}
		\label{Fig:MjCell}\vspace*{-10pt}
\end{figure}

\begin{figure}
	\centering
	\includegraphics[draft=false, width=1\linewidth]{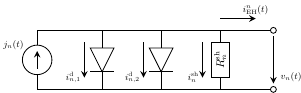}
	\caption{EEC of p-n junction $n, n\in\{1,2,\cdots, N\},$ of the multi-junction photovoltaic RX \cite{Mertens2014, Wang2015, Fakidis2020}.}
	\label{Fig:SingleJunction}\vspace*{-10pt}
\end{figure}

In the following, we present the EEC of the multi-junction photovoltaic cell employed at the optical RX.
The EEC is shown in Fig.~\ref{Fig:MjCell} and employs a series of $N$ voltage sources, where each source represents the output voltage of a single p-n junction.
In particular, the voltage $v_n(t)$ generated by source $n, n \in \{1,2,\cdots, N\},$ in Fig.~\ref{Fig:MjCell} is the voltage at the output of p-n junction $n$, whose EEC is shown in Fig.~\ref{Fig:SingleJunction}, and $R^\text{s}_n, n\in\{1,2,\cdots, N\},$ is the corresponding equivalent series resistance at the output of p-n junction $n$ \cite{Wang2015, Mertens2014}.

Each p-n junction of a multi-junction photovoltaic cell can be equivalently modelled by an EEC comprising two diodes that represent the diffusion current in the neutral region and the particle recombination in the depletion region of the junction, respectively, see Fig.~\ref{Fig:SingleJunction} \cite{Mertens2014, Luque2010}.
In particular, at low received light intensities, the diffusion current is low compared to the particle recombination in the depletion region and the corresponding recombination current can be modelled by a diode with ideality factor $2$ \cite{Mertens2014}.
However, if the received light intensity is high, the diffusion current dominates and can be modelled by a diode with ideality factor $1$ \cite{Luque2010}.
Furthermore, we model the parasitic shunt resistance of p-n junction $n, n\in\{1,2,\cdots, N\}$, by resistor $R^\text{sh}_n$ in Fig.~\ref{Fig:SingleJunction} \cite{Luque2010, Mertens2014}.

To efficiently harvest power and decode information from the received optical signal, similar to \cite{Diamantoulakis2018, Tran2019, Sepehrvand2021, Wang2015, Fakidis2020}, we split the AC and DC signal components via an RL low-pass filter and an RC high-pass filter and utilize these signal components for information reception and EH, respectively, as shown in Fig.~\ref{Fig:MjCell} \cite{Horowitz1989}.
In particular, the energy provided by the DC signal component is harvested at a resistive EH load $R_\text{L}$, whereas the AC component is received at the information RX modelled by resistance $R_\text{d}$.
As in \cite{Diamantoulakis2018, Tran2019, Sepehrvand2021, Wang2015, Fakidis2020}, we assume ideal lossless high- and low-pass electrical filters, i.e., we neglect the parasitic resistances of inductance $L$ and capacitance $C_\text{d}$ and suppose that the DC and AC currents flow only through resistors $R_\text{L}$ and $R_\text{d}$, respectively.

We note that the time interval $T$, capacitance $C_\text{d}$ and inductance $L$ shown in Fig.~\ref{Fig:MjCell} can always be tuned to avoid undesired memory effects due to the charging and discharging of reactive elements, i.e., capacitances and inductances, in the photovoltaic RX \cite{Horowitz1989}.
Thus, after a transient phase that we assume to be negligible, in each time slot $k, \forall k,$ the photovoltaic RX reaches a steady state and we can neglect the dependence of the output currents $i_\text{out}(t)$, $i_\text{EH}(t)$, and $i_\text{ID}(t)$ in time slot $k, \forall k$, see Fig.~\ref{Fig:MjCell}, on information symbols transmitted prior to that time slot, i.e., symbols $s[m]$, $m < k$, $m \in \mathbb{N}$ \cite{Tran2019, Wang2015}, as shown in Fig.~\ref{Fig:InpulseResponse}.

We model the wavelength dependent spectral response $r_n(\lambda)$ of p-n junction $n$ of the multi-junction photovoltaic cell as follows \cite{Mertens2014, Luque2010}:
\begin{equation}
	r_n(\lambda) = \lambda \mu_n(\lambda), \forall n \in \{1,2,\cdots, N\},
\end{equation}
\noindent where $\mu_n(\lambda)$ is the conversion efficiency of p-n junction $n$ as a function of wavelength $\lambda$.
In practical systems, $\mu_n(\lambda)$ is typically almost constant within the passband of p-n junction $n, \forall n$ \cite{Mertens2014, Luque2010}. 
Next, we express the photovoltaic current $j_n(t)$ induced in p-n junction $n, \forall n,$ of the multi-junction photovoltaic cell as follows \cite{Mertens2014}:
\begin{align}
	j_n(t) &= \int p_\text{r}(\lambda, t) r_n(\lambda) d\lambda =  {h} s(t) r_n(\lambda_0) + w_n(t) \nonumber \\ &
	\;+  \int p_\text{e}(\lambda) g(\lambda) r_n(\lambda) d\lambda + \int p_\text{a}(\lambda) r_n(\lambda) d\lambda  ,
	\label{Eqn:InducedCurrent}
\end{align}
\noindent where $w_n(t) = {\int \tilde{w}(\lambda, t) r_n(\lambda)  d\lambda }$ is the photovoltaic current in p-n junction $n, \forall n,$ due to the RX noise.

\subsection{Derivation of EH Models}
\label{Section:EH_Model}

\begin{figure}[!t]
	\centering
	\includegraphics[draft=false, width=1\linewidth]{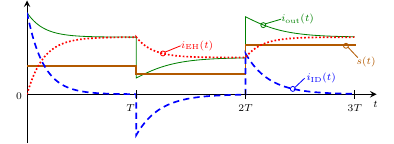} 
	\caption{Information-carrying transmit signal $s(t)$ and currents $i_\text{out}(t)$ (solid green line), $i_\text{ID}(t)$ (dashed blue line), and $i_\text{EH}(t)$ (dotted red line) obtained for the EEC in Fig.~\ref{Fig:MjCell} with $N = 1$ and the circuit parameters specified in Table~\ref{Table:SimulationSetupEEC}.}
	\label{Fig:InpulseResponse}\vspace*{-10pt}
\end{figure}

In the following, we derive an accurate and two approximate analytical closed-form EH models for the considered photovoltaic RX.
First, as in \cite{MaShuai2019, Li2017, Diamantoulakis2018, Pan2019, Tran2019, Sepehrvand2021, Papanikolaou2021, Zhang2018, Makki2018, Wang2015}, and other related works, we neglect the impact of noise on the power harvested at the RX.
Next, since an ideal low-pass RL filter is assumed for EH, we neglect the ripples of the voltage across the EH load resistance $R_\text{L}$ during the reception of information symbol $s[k], \forall k$ \cite{Tietze2012}.
Thus, in steady state, we have $i_\text{ID}(t) \approx 0$ and the corresponding equivalent current induced in p-n junction $n, n\in\{1,2,\cdots, N\},$ in time slot $k, \forall k$, is given by
\begin{align}
	j_n[k] \triangleq j(kT) = {h} s[k] r_n(\lambda_0) + j^\text{a}_n 
	\label{Eqn:EquivalentInducedCurrent}
\end{align}
\noindent where $j^\text{a}_n = \int p_\text{e}(\lambda) g(\lambda) r_n(\lambda) d\lambda + \int p_\text{a}(\lambda) r_n(\lambda) d\lambda$ is the photovoltaic current in p-n junction $n, n\in\{1,2,\cdots, N\},$ due to the optical energy signal and ambient light.

We note that the p-n junctions of a broadband multi-junction photovoltaic cell are typically arranged to operate as optical wavelength splitters, i.e., the passbands of any two junctions do not intersect \cite{Luque2010, Dimroth2016}.
Thus, without loss of generality, we assume that only the first p-n junction at the RX is responsive to the FSO signal at wavelength $\lambda_0$, i.e., we have $r_n(\lambda_0) = 0, \forall n \in\{2,3,\cdots, N\},$ and $r_1(\lambda_0) \geq 0$.
Thus, the equivalent photovoltaic currents in (\ref{Eqn:EquivalentInducedCurrent}) are given by
\begin{align}
	j_1[k] &= j^\text{a}_1 + {h} s[k] r_1(\lambda_0), \label{Eqn:IndCurrent1} \\
	j_n[k] &= j^\text{a}_n, \; \forall n \in\{2,3,\cdots, N\}. \label{Eqn:IndCurrentN}
\end{align}
In the following, for a given vector of induced currents $\boldsymbol{j}^\text{a} = [j^\text{a}_1, j^\text{a}_2, \cdots, j^\text{a}_N] \in \mathbb{R}_{+}^N$ and a given current $j^\text{s}[k] = {h} s[k] r_1(\lambda_0)$, we characterize the corresponding instantaneous harvested power $P_\text{harv}$ at the EH load resistance $R_\text{L}$.

\begin{table}[!t]
	\centering
	\caption{Parameters of the RX circuit in Figs.~\ref{Fig:MjCell} and \ref{Fig:SingleJunction} \cite{Wang2015}.}
	\begin{tabular}{| p{2cm} | p{2cm} |}
		\hline
		$C_\text{d} = \SI{2.5}{\micro\farad}$ &$R_\text{d} = \SI{10}{\kilo\ohm}$,\\ 
		$L = \SI{10}{\milli\henry}$ &$R_\text{L} = \SI{10}{\kilo\ohm}$, \\ 
		\hline
		\multicolumn{2}{|c|}{ $I_{n} = I_{n,1} = I_{n,2} = \SI{1}{\nano\ampere}, \forall n$ }  \\
		\multicolumn{2}{|c|}{ $R^\text{sh}_n = \SI{100}{\mega\ohm}$, $R^\text{s}_n = \SI{100}{\ohm}, \forall n$ }  
		\\
		\hline
	\end{tabular}
	\label{Table:SimulationSetupEEC}\vspace*{-10pt}
\end{table}

	\subsubsection{Accurate EH Model}
	\label{Section:PreciseEH}
	
Let us express the output current of p-n junction $n, n\in\{1,2,\cdots, N\}$, whose EEC is shown in Fig.~\ref{Fig:SingleJunction}, as follows\footnotemark\hspace*{0pt} \cite{Mertens2014, Tietze2012}:
\footnotetext{For the simplicity of presentation, in the remainder of this section, we drop the time slot index $k$.}
\begin{align}
	i^n_{\text{EH}} &= j_n - i^\text{d}_{n,1} - i^\text{d}_{n,2}  - i^n_{\text{sh}} \nonumber\\ 
	&=  j_n - I_{n,1} (e^{\frac{ v_n }{V_T}} - 1)  \nonumber\\
	&\quad\quad\quad\quad- I_{n,2} (e^{\frac{ v_n }{2 V_T}} - 1)- \frac{ v_n }{R^\text{sh}_n} \triangleq \Phi_n( v_n ),
	\label{Eqn:OutputCurrent_Junction}
\end{align}
where $i^\text{d}_{n, 1}$ and $i^\text{d}_{n, 2}$ model the diffusion current and the particle recombination in the depletion region of p-n junction $n, \forall n$, respectively \cite{Luque2010}.
Here, $I_{n,m}, m\in\{1,2\}, n\in\{1,2,\cdots, N\}$, and $V_T$ are the reverse-bias saturation current of diode $m$ of p-n junction $n$ and the thermal voltage of the diode, respectively.

Since the p-n junctions of the multi-junction photovoltaic cell are connected in-series, as shown in Fig.~\ref{Fig:MjCell}, the output currents of the p-n junctions are identical, i.e., we have $i_{\text{EH}} = i^n_{\text{EH}}, n\in\{1,2,\cdots, N\}$ \cite{Tietze2012, Mertens2014}.
As we neglect the ripples of the voltage across the EH load resistance, the harvested power at the EH load can be obtained as 
\begin{equation}
	P_\text{harv} = R_\text{L} i^{*^2}_\text{EH},
	\label{Eqn:PharvCurrent}
\end{equation} 
\noindent where $i^{*}_\text{EH}$ is the solution of the following equation:
\begin{equation}
	i_\text{EH} =\frac{ \sum_n v_n }{R_\Sigma}= \frac{ \sum_n \Phi^{-1}_n (i_\text{EH}) }{R_\Sigma}
	\label{Eqn:OutputCurrent}
\end{equation}
\noindent with $R_\Sigma = \sum_n R_n^{\text{s}} + R_\text{L}$.
We note that since all functions $\Phi_n(\cdot), n\in\{1,2,\cdots, N\},$ in (\ref{Eqn:OutputCurrent}) are monotonically decreasing and non-negative in their domains, so are functions $\Phi^{-1}_n(\cdot), \forall n,$ and hence, their sum $\sum_n \Phi^{-1}_n(\cdot)$.
Therefore, $i_\text{EH}^*$ as solution of (\ref{Eqn:OutputCurrent}) can be interpreted as the intersection point of a monotonically decreasing non-negative function and the identity line.
Hence, (\ref{Eqn:OutputCurrent}) is a fixed point equation and the solution $i_\text{EH}^*$ of (\ref{Eqn:OutputCurrent}) is unique.

Thus, (\ref{Eqn:PharvCurrent}) and (\ref{Eqn:OutputCurrent}) accurately characterize the input-output behavior of a multi-junction photovoltaic RX.
Additionally, (\ref{Eqn:OutputCurrent}) can be solved numerically for any given $j^\text{s}$ and vector $\boldsymbol{j}^\text{a}$.
However, obtaining the value of functions $\Phi^{-1}_n (i_\text{EH}), \forall n,$ in closed form may be unfeasible.
Since complicated expressions hinder the ability to extract valuable insights into the behavior of photovoltaic cells, the derived accurate EH model may be impractical for SLIPT system design.
Therefore, in the following, we simplify (\ref{Eqn:OutputCurrent}) and derive an accurate and an approximate analytical closed-form EH model for $N = 1$ and $N \geq 1$, respectively.

	\subsubsection{Analytical Closed-form EH Models}
	\label{Section:ApproximateEH}
	
\begin{table*}[!t]
	\renewcommand{\arraystretch}{2}
	\centering
	\caption{Accurate and approximate EH models for single-junction and multi-junction photovoltaic RXs.}
	\begin{tabular}{| l | l |}
		\hline
		Accurate EH model & \makecell[l]{
			$P_\text{harv} = i^{*^2}_\text{EH} R_\text{L}$, where $i^*_\text{EH}$ is the solution of equation $i^*_\text{EH} R_\Sigma = \sum_n \Phi^{-1}_n (i_\text{EH}), $ \\ $\Phi_n(v_n) = j_n - I_{n,1} (e^{\frac{ v_n }{V_T}} - 1) - I_{n,2} (e^{\frac{ v_n }{2 V_T}} - 1) - \frac{ v_n }{R^{\text{sh}}_n}$, and $j_n, \forall n,$ is given in (\ref{Eqn:EquivalentInducedCurrent})
		}\\
		\hline
		Approximate EH model & \makecell[l]{
			$P_\text{harv} = i^{*^2}_\text{EH} R_\text{L}$, where $i^*_\text{EH}$ is the solution of equation \\ $\prod_{n=1}^N ( j_n - i_{\text{EH}} + I_n ) \exp \Big[ - \frac{R_\Sigma}{V_T} i_\text{EH} \Big] - \prod_{n=1}^N I_n = 0$ and $j_n, \forall n,$ is given in (\ref{Eqn:EquivalentInducedCurrent})
		}\\
		\hline
		Single-junction cells & $P_\text{harv}(j^\text{s}; j^\text{a}_1) = R_\text{L} \bigg\{j^\text{s} + j^\text{a}_1 + I_1 - \frac{V_T}{R_\Sigma} W_0 \bigg( I_1 \frac{R_\Sigma}{V_T}  \exp \Big[\frac{R_\Sigma}{V_T} (j^\text{s} + j^\text{a}_1 + I_1) \Big] \bigg)\bigg\}^2$\\
		\hline
		\makecell[l]{Multi-junction cells} & $P_\text{harv}(j^\text{s}; \boldsymbol{j}^{\text{a}}) = \frac{V_T^2 R_\text{L}}{ R_\Sigma^2} \bigg\{ \ln \Big[ \frac{ j^\text{s} + j^\text{a}_1 }{I_n} + 1\Big] + \sum_{n=2}^N \ln \Big[ \frac{ j^\text{a}_n }{I_n} + 1\Big] \bigg\}^2$\\
		\hline
	\end{tabular}
	\label{Table:EHmodels}
	\renewcommand{\arraystretch}{1}\vspace*{-10pt}
\end{table*}

To derive analytical EH models, we first assume that for large shunt resistances $R^\text{sh}_n, \forall n,$ the current leakages $i^\text{sh}_n$ are negligible compared to the current flows through the diodes, i.e., $i_n^\text{sh} \ll i^\text{d}_{n, m}$, $\forall n, m$ \cite{Tietze2012}.
Additionally, we assume that the saturation currents of the diodes modelling the photovoltaic cell are identical\footnotemark\hspace*{0pt}, i.e., $I_{n,1} = I_{n,2} = I_n$, $\forall n$ \cite{Mertens2014, Luque2010}.
Thus, we can rewrite (\ref{Eqn:OutputCurrent_Junction}) as follows:
\begin{equation}
	I_n e^{\frac{ v_n }{V_T}}  +I_n e^{\frac{ v_n }{2 V_T}} + i_{\text{EH}} - j_n - 2 I_n = 0.
	\label{Eqn:OutCurrentQuadratic}
\end{equation}

\footnotetext{If the saturation currents are not identical, one can adopt the current $I_{n} \neq I_{{n},{1}} \neq I_{{n},2}$ that minimizes the expression $ \int_{v_n \in \mathcal{V}} |I_{{n}, 1} e^{\frac{ v_n }{V_\text{T}}} + I_{{n}, 2} e^{\frac{ v_n }{2V_\text{T}}} - I_{n} (e^{\frac{ v_{n} }{V_\text{T}}} + e^{\frac{ v_{n} }{2V_\text{T}}})| \text{d} v_{n}$ for the operating range $v_{n} \in \mathcal{V}$ of the output voltages of the p-n junction $n$.}

Next, we obtain the output voltage $v_n$ of p-n junction $n$ as solution of (\ref{Eqn:OutCurrentQuadratic}), which is a quadratic equation in $e^{\frac{ v_n }{2V_T}}$, as follows\footnotemark:
\begin{align}
	v_n &= \;2 V_T \ln \big[ \frac{ - I_n + \sqrt{ {I_n}^2 - 4 I_n (i_{\text{EH}} - j_n - 2 I_n )  } }{ 2 I_n } \big] \nonumber \\
	&= 2 V_T \ln \bigg[ -\frac{1}{2} + \frac{ 1}{2} \sqrt{ 1 + 4\frac{ j_n - i_{\text{EH}} + 2 I_n  }{I_n}  } \bigg] \nonumber \\
	&\approxeq V_T \ln \bigg[  \frac{ j_n - i_{\text{EH}} }{I_n} + 1 \bigg].
	\label{Eqn:SolQuadEqn}
\end{align}
\footnotetext{Here, the last approximation is based on $\sqrt{x + 2.25} - 0.5 \approx \sqrt{x + 1}, \forall x \geq 0$, which can be verified using any numerical computation tool.}
\noindent Then, we substitute (\ref{Eqn:SolQuadEqn}) into (\ref{Eqn:OutputCurrent}) and obtain:
\begin{align}
	\prod_{n=1}^N I_n = \prod_{n=1}^N ( j_n - i_{\text{EH}} + I_n ) \exp \Big[ - \frac{R_\Sigma}{V_T} i_\text{EH} \Big].
	\label{Eqn:ExpOutputCurrent}
\end{align}
Thus, for any given $j^\text{s}$ and vector of currents $\boldsymbol{j}^\text{a}$, the harvested power is given by (\ref{Eqn:PharvCurrent}), where $i^*_\text{EH}$ can be obtained numerically as solution of (\ref{Eqn:ExpOutputCurrent}).

We note that in contrast to (\ref{Eqn:OutputCurrent}), the solution of (\ref{Eqn:ExpOutputCurrent}) does not require the derivation of inverse functions, and thus, is computationally less demanding than solving (\ref{Eqn:OutputCurrent}).
However, the output current $i^*_\text{EH}$ and harvested power $P_\text{harv}$ still can not be determined analytically in closed form for any $\boldsymbol{j}^\text{a} \in \mathbb{R}_{+}^N$ and $j^\text{s} \geq 0$.
Therefore, in the following, we consider the two special cases of single p-n junction photovoltaic cells and multi-junction cells in the high received power regime.
For both cases, we derive the harvested powers\footnotemark $P_\text{harv}(j^\text{s}; \boldsymbol{j}^\text{a})$ in closed form.
\footnotetext{In the following, we adopt the notation $P_\text{harv}(j^\text{s}; \boldsymbol{j}^\text{a})$ for the harvested power to highlight that it is a function of $j^\text{s}$ and $\boldsymbol{j}^\text{a}$. }

		\paragraph{Single-Junction Photovoltaic Cells}
		\label{Section:ApproximateSJ}
		
We note that in the case of a single-junction photovoltaic cell, i.e., for $N = 1$, (\ref{Eqn:ExpOutputCurrent}) can be equivalently rewritten as follows:
\begin{align}
	I_1 &= ( j_1 - i_{\text{EH}} + I_1 ) \exp \Big[ - \frac{R_\Sigma}{V_T} i_\text{EH} \Big] \nonumber \\
	&=  ( j_1 - i_{\text{EH}} + I_1 ) \exp \bigg[ \frac{R_\Sigma}{V_T} (j_1 - i_\text{EH} + I_1 )\bigg]  \nonumber\\&
	\quad\quad\quad\quad\quad\quad\quad\quad \exp \bigg[- \frac{R_\Sigma}{V_T} (j_1 + I_1) \bigg].
	\label{Eqn:ExpOutputCurrent_SingleJunction}
\end{align}
Therefore, the output current $i^*_\text{EH}$ as solution of (\ref{Eqn:ExpOutputCurrent_SingleJunction}) is given by
\begin{align}
	i^*_{\text{EH}} &= j_1 + I_1 - \frac{V_T}{R_\Sigma} W_0 \bigg( I_1 \frac{R_\Sigma}{V_T}  \exp \bigg[\frac{R_\Sigma}{V_T} (j_1 + I_1) \bigg] \bigg),
	\label{Eqn:EHCurrent_SingleJunction}
\end{align}
\noindent where $W_0(\cdot)$ is the principal branch of the Lambert-W function.
Thus, the harvested power of a single-junction photovoltaic cell can be obtained in closed form as a function of photovoltaic current $j^\text{s}$ as follows:
\begin{align}
	P_\text{harv}(j^\text{s}; &j^\text{a}_1) = R_\text{L} \bigg\{j^\text{a}_1 + j^\text{s} + I_1- \frac{V_T}{R_\Sigma}  \nonumber\\ & W_0 \bigg( I_1 \frac{R_\Sigma}{V_T} \exp \Big[\frac{R_\Sigma}{V_T} (j^\text{a}_1 +  j^\text{s} + I_1) \Big] \bigg)\bigg\}^2.
	\label{Eqn:EHModelSingleJunction}
\end{align}

The derived EH model in (\ref{Eqn:EHModelSingleJunction}) characterizes the harvested power at the optical RX for any values of photovoltaic currents $j^\text{s}$ and $j^\text{a}_1$ induced by the received information FSO signal and the optical energy-providing signal and the ambient light, respectively.
Furthermore, as $j = j^\text{s}  + j^\text{a}_1  \to \infty$, (\ref{Eqn:EHModelSingleJunction}) converges asymptotically to the EH model derived in \cite{Wang2015}, where the EEC of the photovoltaic RX comprises only one diode, which models the diffusion current, while the recombination of charges in the depletion region of the p-n junction is neglected.

		\paragraph{Multi-Junction Photovoltaic Cells}
		\label{Section:ApproximateMJ}

In the following, we consider a broadband multi-junction cell, i.e., $N\geq1$.
Since the photovoltaic RX is illuminated by broadband ambient light and an energy-providing optical signal, we assume that the received powers at all p-n junctions are high \cite{Luque2010}. 
In this case, the current flows through the diodes of the EEC of the p-n junctions are high, and thus, we have $ i_\text{EH} \ll j_n, \forall n,$ and hence, ${j_n - i_\text{EH}} \approxeq {j_n}, \forall n$. 
Then, the output current $i^*_\text{EH}$ as solution of (\ref{Eqn:ExpOutputCurrent}) can be obtained as follows \cite{Tietze2012}:
\begin{align}
	i^*_\text{EH} &= \frac{V_T}{R_\Sigma} \sum_{n=1}^N \ln \Big[ \frac{j_n}{I_n} + 1\Big].
\end{align}
Finally, the harvested power as a function of the FSO photovoltaic current $j^\text{s}$ is given by:
\begin{align}
	P_\text{harv}(j^\text{s} ; \boldsymbol{j}^{\text{a}}) =& \frac{V_T^2 R_\text{L}}{ R_\Sigma^2} \bigg\{  \ln \Big[ \frac{ j^\text{s}  + j^\text{a}_1 }{I_n} + 1\Big] \nonumber \\
	&+ \sum_{n=2}^N \ln \Big[ \frac{ j^\text{a}_n }{I_n} + 1\Big] \bigg\}^2.
	\label{Eqn:EhModelMultiJunction}
\end{align}
The proposed accurate and approximate EH models are summarized in Table~\ref{Table:EHmodels}.
We will exploit the analytical closed-form expressions $P_\text{harv}(j^\text{s}; \boldsymbol{j}^{\text{a}})$ in (\ref{Eqn:EHModelSingleJunction}) and (\ref{Eqn:EhModelMultiJunction}) obtained for photovoltaic RXs equipped with single and multiple p-n junctions, respectively, for the design of the information RX in Section~\ref{Section:ID}.
		
\section{Information Reception at the RX}
	\label{Section:ID}
	In the following, we discuss the information reception at the photovoltaic RX.
To this end, we first show that adopting the optimal filter for the output information signal may not be practical for non-linear photovoltaic RXs, and therefore, we design a suboptimal information RX.
Next, we derive the optimal transmit signal distribution maximizing the achievable information rate, and for practical OOK modulation at the information TX, we obtain the bit-error probability for maximum likelihood detection at the RX. 
	\subsubsection{Information RX}
	\label{Section:Filtering}
	
\begin{table*}[!t]
	\centering
	\caption{Simulation parameters.}
	\begin{tabular}{|c|l|c|c}
		\hline
		\multicolumn{2}{|c|}{\makecell{Parameters of the photovoltaic RXs \cite{Dimroth2016}}} & \multicolumn{2}{c|}{\makecell{Parameters of the TXs and FSO channels}}\\
		\hline		
		\multicolumn{1}{|c|}{\makecell{$N=1$}} & \multicolumn{1}{c|}{ \makecell{ $\lambda^1_\text{min} = \SI{400}{\nano\meter}$, $\lambda^1_\text{max} = \SI{700}{\nano\meter}$, $A_\text{P} = \SI{1}{\centi\meter\squared}$}} & \multicolumn{2}{c|}{\makecell[c]{$\lambda_0 = \SI{980}{\nano\meter}$, $\lambda_n = \frac{\lambda^n_\text{min} + \lambda^n_\text{max}}{2}, n \in\mathbb{N}$}}\\
		\hline		
		\multicolumn{1}{|c|}{\makecell{$N=4$}} & \multicolumn{1}{c|}{ \makecell{ $\lambda^1_\text{min} = \SI{400}{\nano\meter}$, $\lambda^2_\text{min} = \SI{650}{\nano\meter}$, 
				$\lambda^3_\text{min} = \SI{900}{\nano\meter}$, \\  $\lambda^4_\text{min} = \SI{1100}{\nano\meter}$, $\lambda^4_\text{max} = \SI{1800}{\nano\meter}$, $A_\text{P} = \SI{1}{\centi\meter\squared}$ }} & \multicolumn{1}{c|}{\makecell{Channel gains \\ Noise variance}} & \multicolumn{1}{c|}{ \makecell{$h = g_{n} = 1, \forall n $ \\ $\sigma^2 = \SI{-60}{\dBm}$}} \\
		\hline
	\end{tabular}
	\label{Table:SimulationSetup}\vspace*{-10pt}
\end{table*}
As in \cite{Tran2019, Wang2015}, we utilize the current flow $i_\text{ID}(t)$ across the resistance $R_\text{d}$ shown in Fig.~\ref{Fig:MjCell} as information-carrying signal and employ it for information decoding.
Since only the \gls*{ac} component of the received FSO signal is utilized for information reception, the current flow $i_\text{ID}(t)$ can be expressed as \cite{Horowitz1989}:
\begin{equation}
	i_\text{ID}(t) =  i_\text{ID}^\text{s} (t) + i_\text{ID}^\text{n} (t),
	\label{Eqn:IdCurrent}
\end{equation}
where $i_\text{ID}^\text{s} (t)$ is the output current\footnotemark\hspace*{0pt} due to the transmit information signal $s(t)$ and depends on the parameters of the RX and the power spectral density of the received light $p_\text{r} (\lambda, t)$ in (\ref{Eqn:ReceivedSpectrum}) \cite{Horowitz1989}.
\footnotetext{We note that the current flow through a charging (discharging) capacitor can be modelled by exponential function $i_\text{ID}^\text{s} (t) = \pm I_\text{ID}^\text{s} \exp (-\frac{t}{R_\text{eq} C_\text{d}})$, where the sign determines the direction of the current flow and $I_\text{ID}^\text{s}$ and $R_\text{eq}$ are the maximum charging (discharging) current and the equivalent resistance which depend on the parameters of the optical RX, received signal power, and ambient light intensity \cite{Horowitz1989}.}
Furthermore, $i_\text{ID}^\text{n} (t)$ is the equivalent noise at resistance $R_\text{d}$ that comprises the contributions of both the received noise $\tilde{w}(\lambda, t)$ and the noise generated by the elements of the photovoltaic RX \cite{Mertens2014, Wang2015}.
For demodulation of the transmitted message, one can design a filter that is matched to the information signal and yields the maximum achievable output \gls*{snr} after sampling \cite{Horowitz1989, Tse2005}.
However, the design and practical realization of such a filter may not be feasible due to the non-linearity of the photovoltaic RX, i.e., the non-linear dependency of $i_\text{ID}^\text{s} (t)$ on the spectral density of the received light $p_\text{r}(\lambda, t)$, see Table~\ref{Table:EHmodels}, and the high required sampling rate for an accurate discretization of the output signal $i_\text{ID}(t)$ \cite{Horowitz1989, Wang2015}.
Therefore, in the following, we resort to a suboptimal information RX.

To this end, we integrate the voltage $v_\text{d}(t) = i_\text{ID}(t) R_\text{d}$ across resistance $R_\text{d}$ over time slot $k, k\in\mathbb{N},$ and obtain:
\begin{align}
	r[k] &=  \int_{(k-1)T}^{kT} R_\text{d} i_\text{ID}(t) \text{d} t = \int_{(k-1)T}^{kT} R_\text{d} C_\text{d} \frac{\text{d} v_\text{C}(t)  }{\text{d} t} \text{d} t
	\nonumber \\ &= \tilde{x}[k] - \tilde{x}[k-1] + \tilde{n}[k] - \tilde{n}[k-1],
	\label{Eqn:FilterOutputGeneral}
\end{align}
\noindent where $v_\text{C}(t) = v^\text{s}_\text{C}(t) + v^\text{n}_\text{C}(t)$ is the voltage across the capacitance $C_\text{d}$ of the high-pass filter, as shown in Fig.~\ref{Fig:MjCell},~and $v^\text{s}_\text{C}(t)$ and $v^\text{n}_\text{C}(t)$ are the components of $v_\text{C}(t)$ corresponding to the received FSO signal $h s(t)$ and noise $i_\text{ID}^\text{n} (t)$, respectively.
Furthermore, in (\ref{Eqn:FilterOutputGeneral}), $\tilde{x}[k] = R_\text{d} C_\text{d} v^\text{s}_\text{C}(kT)$ and $\tilde{n}[k] = R_\text{d} C_\text{d} v^\text{n}_\text{C}(kT)$ are the information signal and the noise in time slot $k$ at the output of the RX, respectively.

To compute $\tilde{x}[k]$, we observe that in the steady state, the current flow $i_\text{ID}(kT) \approx 0, \forall k$, as shown Fig.~\ref{Fig:InpulseResponse}, and voltage $v_\text{C}(kT)$ is equal to the voltage $R_\text{L} i_\text{EH}(kT)$ across the EH load resistance $R_\text{L}$ in time slot $k$ \cite{Horowitz1989}, see Fig.~\ref{Fig:MjCell}.
Thus, we obtain the information symbol received in time slot $k, \forall k,$ as 
\begin{align}
	\tilde{x}[k] &= R_\text{d}C_\text{d} v^\text{s}_\text{C}(kT) \nonumber \\
	&= R_\text{d}C_\text{d} \sqrt{R_\text{L}} \sqrt{P_\text{harv}(j^\text{s}[k]; \boldsymbol{j}^\text{a} )},
\end{align}
where $j^\text{s}[k] = h s[k] r_1(\lambda_0)$ and $P_\text{harv}(\cdot; \cdot)$ is the power harvested at the EH load that is derived in Section~\ref{Section:EH} and given in Table~\ref{Table:EHmodels}.

We note that the output symbol $r[k], \forall k,$ in (\ref{Eqn:FilterOutputGeneral}) depends not only on $\tilde{x}[k]$ but also on $\tilde{x}[k-1]$.
To avoid the undesired memory of the photovoltaic RX, we assume that a sequence of symbols $\boldsymbol{s} = \{s[0], s[1], \cdots, s[K-1]\}$ of length $K \in \mathbb{N}$ is transmitted.
Then, we obtain the normalized output symbol $y[k]$ in time slot $k, \forall k$, as follows\footnotemark:
\begin{equation}
	y[k] = \frac{1}{R_\text{d}C_\text{d} \sqrt{R_\text{L}}} \sum_{p=0}^k r[p] = x[k] + n[k],
	\label{Eqn:CommChannel}
\end{equation}
where $x[k] = \sqrt{P_\text{harv}(j^\text{s}[k]; \boldsymbol{j}^\text{a} )}$ and $n[k] = \frac{1}{ R_\text{d}C_\text{d} \sqrt{R_\text{L}} } \tilde{n}[k]$ are the normalized output information signal and noise, respectively.
\footnotetext{We note that if the information RX is able to measure voltage $v_\text{C}(t)$, the integration of the current flow $i_\text{ID}(t)$ in (\ref{Eqn:FilterOutputGeneral}) is not needed and the output symbol can be directly obtained as $y[k] = R_\text{L}^{-\frac{1}{2}} v_\text{C}(kT), \forall k$.}
Thus, unlike for optical RXs based on photodetectors in \cite{MaShuai2019, Ajam2020}, the output information signal $x[k]$ at the photovoltaic RX is not a linear function of the TX symbol $s[k], \forall k,$ but is determined by the non-linear functions in Table~\ref{Table:EHmodels}.

We note that the output noise samples ${n}[k], \forall k \in\mathbb{N},$ include the impacts of the received noise as well as the thermal and shot noise of the photovoltaic RX \cite{Lapidoth2009}.
To model ${n}[k], \forall k$, we assume that the thermal noise at the RX dominates and the impact of transmit information symbol $s[k]$ on $n[k]$ is negligible \cite{Wang2015}.
Thus, we model ${n}[k], \forall k,$ as \gls*{iid} realizations of AWGN with variance $\sigma^2$ \cite{Lapidoth2009, Wang2015}.
	\subsubsection{Achievable Information Rate}
	\label{Section:ContDistribution}
	
The considered photovoltaic RX harvests power from ambient light, an optical energy-providing signal, and an information FSO signal.
However, information transmission is challenging and constitutes a performance bottleneck of practical SLIPT systems \cite{Sepehrvand2021, Wang2015, Fakidis2020}.
Therefore, in this paper, we design the optical SLIPT system assuming that for a given vector $\boldsymbol{j}^\text{a}$, the information FSO TX aims at maximizing the information rate between the transmit and receive information signals, while the RX opportunistically harvests the powers of the ambient light and the received optical energy-providing and information FSO signals.
In the following proposition, for a given maximum information transmit power $A^2$ and power spectral density of the optical energy-providing signal $p_\text{e}(\lambda)$, we determine the optimal distribution of transmit symbol $s$, which maximizes the achievable rate of the considered SLIPT system employing a multi-junction photovoltaic RX.
\begin{proposition}
	\label{Theorem:Capacity}
	For a given cdf of the transmit information symbol, $F_{s}(s)$, the achievable rate in nats per channel use is given by
	\begin{equation}
		R(F_{s}) = \frac{1}{2} \ln \left[ 1+ \frac{e^{2u(x; F_{s})}}{ 2\pi e \sigma^2 } \right],
		\label{Eqn:RateBoundGeneral}
	\end{equation}
	\noindent where $u(x; F_{s})$ is the differential entropy of random variable $x = \sqrt{P_\text{\upshape harv}(j^\text{\upshape s}[k]; \boldsymbol{j}^\text{\upshape a})}$ for the given $F_{s}(\cdot)$.
	Furthermore, the achievable rate in (\ref{Eqn:RateBoundGeneral}) is maximized by the following cdf of the transmit information symbol
	\begin{equation}
		F^*_{s}(s) = \begin{cases}
			0, \quad & \text{\upshape if} \, s<0, \\
			\frac{ \sqrt{P_\text{\upshape harv}( h s r_1(\lambda_0); \boldsymbol{j}^\text{\upshape a} )} -  x_0  }{ \theta(A^2) } , \quad &\text{\upshape if} \, s\in[0, A^2], \\
			1, \quad & \text{\upshape if} \, s>A^2,
		\end{cases}
		\label{Eqn:CapacityAchievingDistributionS}
	\end{equation}
	\noindent where $x_{A} = \sqrt{P_\text{\upshape harv}( h A^2 r_1(\lambda_0); \boldsymbol{j}^\text{\upshape a})}$, $x_0 = \sqrt{P_\text{\upshape harv}( 0; \boldsymbol{j}^\text{\upshape a})} $, and $\theta(A^2) = x_A - x_0$, and can be expressed as a function of $A^2$ as follows:
	\begin{equation}
		\bar{R}(A^2) = \frac{1}{2} \ln \Big( 1 + \frac{ \theta^2(A^2)  } {2 \pi e \sigma^2 } \Big).
		\label{Eqn:RateBound}
	\end{equation}
\end{proposition}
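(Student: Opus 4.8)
The plan is to treat the communication channel (\ref{Eqn:CommChannel}) as a memoryless additive Gaussian channel $y = x + n$ in the transformed variable $x = \sqrt{P_\text{harv}(j^\text{s}; \boldsymbol{j}^\text{a})}$, and to reduce the optimization over $F_{s}$ to a maximum-entropy problem for $x$. Since $j^\text{s} = h s r_1(\lambda_0)$ is a strictly increasing affine map of $s$, and, as can be verified from the closed-form models in Table~\ref{Table:EHmodels} (the single-junction expression is strictly increasing in $j^\text{s}$ through the principal Lambert-W branch, and the multi-junction expression through the logarithm), $P_\text{harv}(\cdot; \boldsymbol{j}^\text{a})$ is strictly monotonically increasing, the map $s \mapsto x(s)$ is a continuous bijection from $[0, A^2]$ onto $[x_0, x_A]$. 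Consequently $I(s; y) = I(x; y)$, and it suffices to study the channel in the variable $x$, whose support is the bounded interval $[x_0, x_A]$.

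First I would establish (\ref{Eqn:RateBoundGeneral}). Writing $I(x; y) = u(y) - u(y \mid x) = u(y) - u(n)$, where $u(\cdot)$ denotes differential entropy and the last equality uses translation invariance together with the independence of $x$ and $n$ to give $u(y \mid x) = u(n) = \tfrac12 \ln(2\pi e \sigma^2)$, I would lower-bound $u(y)$ via the entropy power inequality, $e^{2u(y)} \ge e^{2u(x)} + e^{2u(n)} = e^{2u(x;F_{s})} + 2\pi e \sigma^2$. Substituting the Gaussian noise entropy and simplifying yields the achievable rate $R(F_{s}) = \tfrac12 \ln\!\big(1 + e^{2u(x;F_{s})}/(2\pi e \sigma^2)\big)$, which is exactly (\ref{Eqn:RateBoundGeneral}).

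Next I would carry out the maximization. Because $\tfrac12\ln(1+\cdot)$ is strictly increasing, maximizing $R(F_{s})$ over admissible $F_{s}$ is equivalent to maximizing the differential entropy $u(x; F_{s})$ of $x$ over all distributions supported on $[x_0, x_A]$. By the standard maximum-entropy principle, the differential entropy of a random variable on a bounded interval is maximized by the uniform distribution, with value $u(x) = \ln(x_A - x_0) = \ln\theta(A^2)$; this maximizer is realizable precisely because $s \mapsto x(s)$ is a bijection, so the uniform law on $x$ is induced by a valid law on $s$. Substituting $e^{2u(x)} = \theta^2(A^2)$ into (\ref{Eqn:RateBoundGeneral}) gives (\ref{Eqn:RateBound}).

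Finally, to recover the optimal cdf (\ref{Eqn:CapacityAchievingDistributionS}), I would transform back. For $s \in [0, A^2]$, monotonicity of $x(\cdot)$ gives $F^*_{s}(s) = \Pr\{S \le s\} = \Pr\{x(S) \le x(s)\} = F_x(x(s))$, and the uniform cdf $F_x(x) = (x - x_0)/\theta(A^2)$ evaluated at $x(s) = \sqrt{P_\text{harv}(h s r_1(\lambda_0); \boldsymbol{j}^\text{a})}$ yields the stated expression, with the piecewise values $0$ and $1$ outside $[0, A^2]$ following from the support constraint $\text{dom}\{f_{s}\} \subseteq [0, A^2]$. The main obstacle I anticipate is the justification/achievability step rather than the algebra: one must verify that $P_\text{harv}(\cdot; \boldsymbol{j}^\text{a})$ is strictly increasing, so that $x(s)$ is a genuine bijection and the induced $F^*_{s}$ is a valid, non-decreasing cdf, and one should be explicit that the entropy power inequality furnishes only a lower bound on $I(x;y)$, so that $R(F_{s})$ is an achievable rate and $F^*_{s}$ optimizes this bound rather than the exact capacity.
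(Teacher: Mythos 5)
Your proposal is correct and takes essentially the same route as the paper's own proof: an entropy-power-inequality lower bound on $I(s;y)$, maximization of the differential entropy $u(x;F_{s})$ by the uniform distribution on the bounded interval $[x_0, x_A]$, and a change of variables through the monotone map $s \mapsto \sqrt{P_\text{harv}(h s r_1(\lambda_0); \boldsymbol{j}^\text{a})}$ to recover $F^*_{s}$. Your additional care in verifying the strict monotonicity of $P_\text{harv}$ (hence the bijectivity needed to realize the uniform law on $x$) and in flagging that the EPI yields an achievable rate rather than the exact capacity makes explicit two points the paper's appendix leaves implicit, but it does not constitute a different method.
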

\begin{proof}
	Please refer to Appendix~\ref{Appendix:ProofCapacity}.
\end{proof}

Proposition~\ref{Theorem:Capacity} shows that in contrast to linear optical information RXs based on photodetectors \cite{MaShuai2019, Ajam2020}, uniformly distributed information transmit symbols $s$ are not optimal for non-linear photovoltaic RXs.
We note that the cdf $F^*_{s}(\cdot)$ in (\ref{Eqn:CapacityAchievingDistributionS}), and thus, the achievable information rate in (\ref{Eqn:RateBound}) depend on the EH model in Table~\ref{Table:EHmodels}.
Furthermore, the achievable information rate in Proposition~\ref{Theorem:Capacity} does not depend on the instantaneous output power at the RX, but is determined by the sensitivity $\theta(A^2)$ of the photovoltaic RX.
In the following corollary, we determine the average harvested power at the RX when the distribution of transmit symbols $s$ maximizing the achievable rate is adopted at the information FSO TX.

\begin{corollary}
	\label{Theorem:PowerCont}
	For the cdf of transmit symbol $s$ in (\ref{Eqn:CapacityAchievingDistributionS}), the average harvested power at the photovoltaic RX as a function of $A^2$ is given by
	\begin{equation}
	\bar{P}_\text{\upshape harv}(A^2) = \frac{1}{3} ( x^2_{A} + x^2_0 + {x_{A}  x_0 } ),
	\label{Eqn:ContAverageHarvestedPower}
\end{equation} 
\end{corollary}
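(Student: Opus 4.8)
The plan is to exploit the fact that the cdf in (\ref{Eqn:CapacityAchievingDistributionS}) is constructed precisely so that the nonlinear transformation $x = \sqrt{P_\text{harv}(hsr_1(\lambda_0); \boldsymbol{j}^\text{a})}$ of the transmit symbol $s$ becomes \emph{uniformly} distributed on $[x_0, x_A]$. Once this is established, the average harvested power reduces to the second moment of a uniform random variable, since $\bar{P}_\text{harv}(A^2) = \mathbb{E}\{P_\text{harv}\} = \mathbb{E}\{x^2\}$.

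First I would introduce the monotone map $g(s) \triangleq \sqrt{P_\text{harv}(hsr_1(\lambda_0); \boldsymbol{j}^\text{a})}$, so that $g(0) = x_0$, $g(A^2) = x_A$, and $P_\text{harv} = x^2 = g(s)^2$. The closed-form expressions in Table~\ref{Table:EHmodels} (the Lambert-$W$ form for $N=1$ and the logarithmic form for $N \geq 1$) are strictly increasing in the FSO photovoltaic current $j^\text{s} = hsr_1(\lambda_0)$, so $g$ is strictly increasing and invertible on $[0, A^2]$; this is the same monotonicity that makes (\ref{Eqn:CapacityAchievingDistributionS}) a valid cdf. Then, for $\xi \in [x_0, x_A]$, the change of variables gives $F_x(\xi) = \Pr\{g(s) \leq \xi\} = F^*_s(g^{-1}(\xi))$, and substituting $g(g^{-1}(\xi)) = \xi$ into the middle branch of (\ref{Eqn:CapacityAchievingDistributionS}) yields $F_x(\xi) = (\xi - x_0)/\theta(A^2) = (\xi - x_0)/(x_A - x_0)$, confirming that $x$ is uniform on $[x_0, x_A]$.

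Finally, I would evaluate the second moment of this uniform law and factor the resulting difference of cubes:
\begin{equation}
	\bar{P}_\text{harv}(A^2) = \frac{1}{x_A - x_0} \int_{x_0}^{x_A} \xi^2 \, \text{d}\xi = \frac{x_A^3 - x_0^3}{3(x_A - x_0)} = \frac{x_A^2 + x_A x_0 + x_0^2}{3},
\end{equation}
which is exactly (\ref{Eqn:ContAverageHarvestedPower}). The computation itself is elementary; the only point requiring care is the justification that $g$ is strictly monotone and invertible, so that the cdf transformation is valid. Since this monotonicity is already implicit in the derivation of Proposition~\ref{Theorem:Capacity}, the main ``obstacle'' is simply invoking it cleanly rather than any genuine analytical difficulty.
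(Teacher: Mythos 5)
Your proposal is correct and follows essentially the same route as the paper: both arguments rest on the fact that the cdf in (\ref{Eqn:CapacityAchievingDistributionS}) renders $x = \sqrt{P_\text{harv}(j^\text{s}; \boldsymbol{j}^\text{a})}$ uniform on $[x_0, x_A]$, and then evaluate the second moment $\mathbb{E}\{x^2\}$; the only cosmetic difference is that the paper first derives the pdf of $\rho = x^2$ and integrates $\rho f_\rho(\rho)$, whereas you integrate $\xi^2$ against the uniform pdf of $x$ directly, which is the same computation by a change of variables.
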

\begin{proof}
	Please refer to Appendix~\ref{Appendix:ProofPowerCont}.
\end{proof}

Corollary~\ref{Theorem:PowerCont} reveals that since $x_0 > 0$ for $j^\text{a}_n > 0, \forall n$, both the achievable information rate and the corresponding average harvested power depend not only on cdf $F_{s}$, but also on the power spectral densities ${p}_\text{a}(\lambda)$ and ${p}_\text{e}(\lambda)$ of the ambient light and the optical energy signal.
Thus, an accurate modelling of optical RXs is important to characterize both the harvested power and the performance of the information RX.

In the following, since the optimal distribution of transmit information signal $s$ in Proposition~\ref{Theorem:Capacity} may be impractical for SLIPT systems, we assume that OOK modulation is adopted at the information TX and derive the resulting bit-error probability for maximum likelihood information detection at the photovoltaic RX.

\begin{figure}[!t]
	\centering
	\includegraphics[draft=false, width=0.44\textwidth]{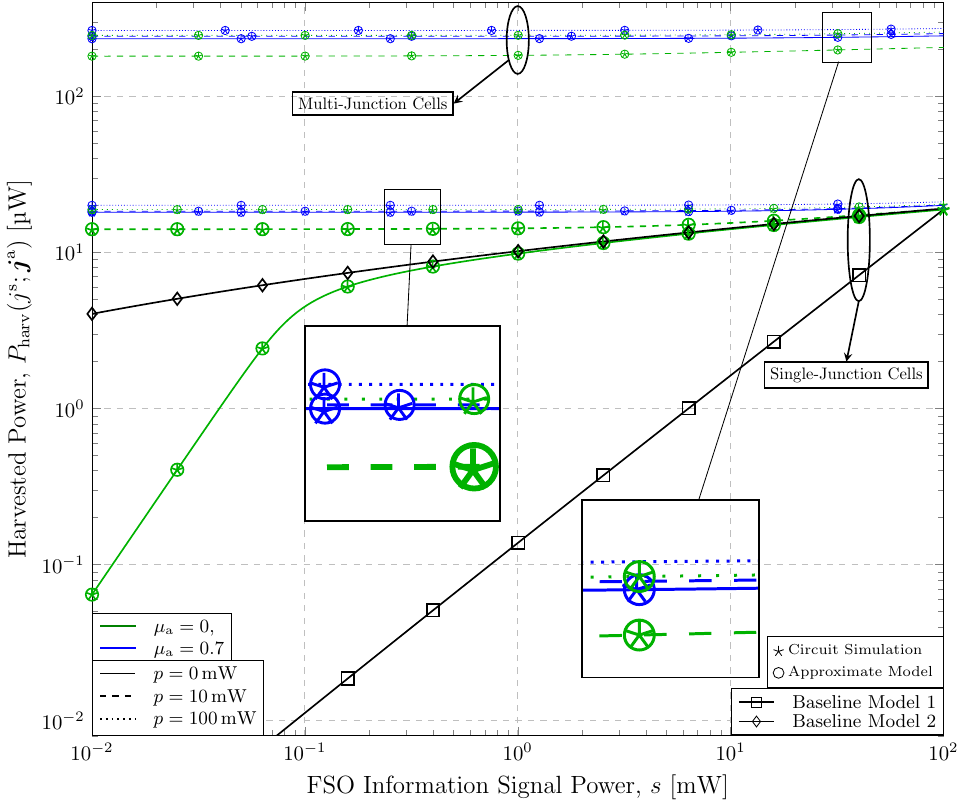} \vspace*{-7pt}
	\caption{Validation of the derived closed-form analytical EH models via circuit simulations for photovoltaic RXs equipped with single ($N = 1$) and multiple ($N=4$) p-n junctions for different ambient light intensities $\mu_a$ and FSO TX powers $s$ and $p$.}
	\label{Fig:ResultsValidationEH}	\vspace*{-7pt}
\end{figure}

	\subsubsection{OOK Modulation}
	\label{Section:OokModulation}
	
In the following, we assume that the information FSO TX employs \gls*{ook} modulation with equiprobable symbols, i.e., the pdf of $s$ is binary with $\text{dom} \{f_{s}\} = \{0,A^2\}$ and $f_{s}(0) = f_{s}(A^2) = \frac{1}{2}$.
In this case, we have $x[k] \in \{ x_0, x_{A}  \}$ and the maximum likelihood detected signal $\hat{s}[k]$ in time slot $k$ at the photovoltaic RX can be expressed as follows \cite{Ajam2020}:
\begin{equation}
	\hat{s}[k] = \begin{cases}
		A^2, \quad \text{if} \; y[k] \geq \frac{1}{2} (x_0 + x_{A}), \\
		0, \;\;\quad \text{otherwise}.
	\end{cases}
\label{Eqn:MLDecoding}
\end{equation}

In the following proposition, we determine the bit-error probability for maximum likelihood signal detection at the photovoltaic RX.
\begin{proposition}
	The bit-error probability for maximum likelihood detection at the RX as function of $A^2$ is given by
	\begin{equation}
		P_\text{\upshape e}(A^2) = Q\bigg(\frac{\theta(A^2)}{ 2\sigma} \bigg),
		\label{Eqn:BER}
	\end{equation}
	\noindent where $Q(\cdot)$ denotes the Gaussian Q-function.
	\label{Theorem:BER}
\end{proposition}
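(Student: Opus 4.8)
The plan is to recognize this as the classical binary detection problem and evaluate the error probability of the given maximum likelihood rule by the law of total probability over the two equiprobable OOK symbols. From the channel model (\ref{Eqn:CommChannel}), the normalized output symbol is $y[k] = x[k] + n[k]$, where under OOK the signal component takes only the two values $x[k] \in \{x_0, x_A\}$ and $n[k]$ is zero-mean AWGN with variance $\sigma^2$. Since $P_\text{harv}(j^\text{s}; \boldsymbol{j}^\text{a})$ is monotonically increasing in $j^\text{s}$ (as is evident from the closed-form models in Table~\ref{Table:EHmodels}), we have $x_A > x_0$ and hence $\theta(A^2) = x_A - x_0 > 0$, so that the threshold $\tfrac{1}{2}(x_0 + x_A)$ in the ML rule (\ref{Eqn:MLDecoding}) is the midpoint of the two signal levels and the argument of the Q-function will be nonnegative.

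First I would condition on the transmitted symbol $s[k] = 0$, for which $y[k] = x_0 + n[k]$. The detector in (\ref{Eqn:MLDecoding}) errs precisely when it outputs $A^2$, i.e., when $y[k] \geq \tfrac{1}{2}(x_0 + x_A)$, which is equivalent to $n[k] \geq \tfrac{1}{2}(x_A - x_0) = \tfrac{1}{2}\theta(A^2)$. Because $n[k]$ is Gaussian with zero mean and variance $\sigma^2$, this conditional error probability equals $Q\bigl(\theta(A^2)/(2\sigma)\bigr)$ by definition of the Gaussian Q-function. Next I would condition on $s[k] = A^2$, for which $y[k] = x_A + n[k]$; here the detector errs when $y[k] < \tfrac{1}{2}(x_0 + x_A)$, i.e., when $n[k] < -\tfrac{1}{2}\theta(A^2)$, and by the symmetry of the Gaussian density this probability is again $Q\bigl(\theta(A^2)/(2\sigma)\bigr)$.

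Finally I would combine the two conditional error probabilities with the equiprobable priors $f_{s}(0) = f_{s}(A^2) = \tfrac{1}{2}$, obtaining $P_\text{e}(A^2) = \tfrac{1}{2} Q\bigl(\theta(A^2)/(2\sigma)\bigr) + \tfrac{1}{2} Q\bigl(\theta(A^2)/(2\sigma)\bigr) = Q\bigl(\theta(A^2)/(2\sigma)\bigr)$, which is the claimed expression (\ref{Eqn:BER}). There is no substantive obstacle in this argument, as it reduces to the standard on-off/antipodal detection result; the only point that genuinely requires care is establishing the sign of $\theta(A^2)$ via the monotonicity of the harvested power, which guarantees both that the midpoint threshold is the correct ML threshold and that the resulting Q-function argument is nonnegative.
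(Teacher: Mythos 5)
Your proof is correct and follows essentially the same route as the paper's: both apply the law of total probability over the two equiprobable OOK symbols, use the symmetry of the Gaussian noise about the midpoint threshold to equate the two conditional error probabilities, and evaluate each as the tail probability $Q\bigl(\theta(A^2)/(2\sigma)\bigr)$. Your additional remark on the positivity of $\theta(A^2)$ via monotonicity of $P_\text{harv}$ is a minor refinement the paper leaves implicit, not a different argument.
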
 
\begin{proof}
	Please refer to Appendix~\ref{Appendix:ProofBER}.
\end{proof}
Thus, we conclude that similar to the maximum achievable information rate in (\ref{Eqn:RateBound}) and the average harvested power in (\ref{Eqn:ContAverageHarvestedPower}), the bit-error probability at the optical photovoltaic RX also depends not only on $A^2$, but also on the spectral power densities of the received ambient light and the optical energy signal.
In the following section, we will numerically investigate the performance of EH and information reception at the considered photovoltaic RX for different ambient light intensities, powers of the energy-providing transmit signal, and maximum FSO TX powers $A^2$.

\section{Numerical Results}
	\label{Section:NumResults}
	In this section, we evaluate the performance of the proposed SLIPT system numerically.
First, we discuss the setup adopted for our simulations.
Next, we validate the analytical EH models and evaluate the performance of information reception at the photovoltaic RX derived in Sections~\ref{Section:EH} and \ref{Section:ID}, respectively.
Finally, we investigate the rate-power tradeoff for SLIPT systems with multi-junction optical RXs.

	\subsection{Simulation Setup}
	\label{Section:SimSetup}
	In the following, we present the system parameters adopted for our numerical simulations.
To this end, we first discuss the adopted black body model for the power spectrum of the ambient light.
Then, we elaborate on the parameters of the considered SLIPT system.

\subsubsection{Black Body Model of Ambient Light}
In our numerical results, we assume that the photovoltaic cell is illuminated by sunlight, and to model the irradiance spectrum of the sun, we adopt the black body model \cite{Mertens2014}.
The normalized spectral irradiance of the absolute black body follows Planck's law and can be expressed as a function of wavelength $\lambda$ as follows \cite{Mertens2014}:
\begin{equation}
	p_\text{B}(\lambda) = \frac{2 k_\text{p} c^2}{\lambda^5} \frac{1}{\exp(\frac{k_\text{p}  c}{k_\text{b} T_\text{sun} \lambda}) - 1},
	\label{Eqn:PlanckLaw}
\end{equation}
\noindent where $c$ is the speed of the light, $k_\text{p}$ and $k_\text{b}$ are the Planck and Boltzmann constants, respectively, and $T_\text{sun} = \SI{5778}{\kelvin}$ is the equivalent absolute temperature of the sun.
As a result, the power spectral density of the ambient light is given by $	p_\text{a}(\lambda) = \mu_a \nu_\text{s} p_\text{B}(\lambda)$, where $\mu_a\in[0,1]$ and $\nu_\text{s} = \frac{ \alpha_\text{SE} A_\text{S} A_\text{P} }{A_\text{E}}$ are the sunlight intensity at the RX and the equivalent size of the cell, respectively \cite{Mertens2014}.
Here, $\alpha_\text{SE} = 5.72 \cdot 10^{-9} \SI{}{\steradian}$ is the solid angle of the Earth seen from the sun, $A_\text{S} = 6.07\cdot 10^{12}\, \SI{}{\square\kilo\meter}$ and $A_\text{E} = 5.1\cdot 10^{8}\, \SI{}{\square\kilo\meter}$ are the areas of the Earth and sun surface, respectively, and $A_\text{P}$ is the surface area of the photovoltaic cell \cite{Luque2010}.

Coefficient $\mu_a$ is used to capture atmospheric losses and misalignment between the photovoltaic cell orientation and the direction of the sun \cite{Mertens2014, Luque2010}.
In particular, $\mu_a = 0$ and $\mu_a = 1$ correspond to the two cases, where the sunlight has no effect on the harvested power and where the atmosphere is transparent and the photovoltaic cell is perfectly aligned, respectively.
Furthermore, we note that in indoor scenarios, the normalized spectral irradiance of a lightbulb or an illuminating lamp can also be approximated by Planck's law (\ref{Eqn:PlanckLaw}) and coefficient $\mu_\text{a} \in \mathbb{R}_{+}$ can be chosen to yield a match between the power spectral density $p_\text{a}(\lambda)$ and measurement or simulation results \cite{Luque2010, Mertens2014}.

\subsubsection{Simulation Parameters}
For our simulations, we adopt photovoltaic cells of size $A_\text{P} = \SI{1}{\centi\meter\squared}$ equipped with a single ($N=1$) and $N=4$ p-n junctions modelled by the EEC shown in Fig.~\ref{Fig:MjCell}, whose parameters are specified in Table~\ref{Table:SimulationSetupEEC}.
We assume that the passbands of the p-n junctions of the photovoltaic cell do not intersect, i.e., the responsivity of p-n junction $n$ is modelled as follows \cite{Mertens2014}:
\begin{equation}
	\mu_n(\lambda) = \begin{cases}
		0.7\frac{q_0}{k_\text{p} c}, &\text{if} \; [\lambda^n_\text{min}, \lambda^n_\text{max}], \\
		0, &\text{otherwise},
	\end{cases}
\end{equation}
where $q_0$ is the elementary charge and $\lambda^n_\text{max} = \lambda^{n+1}_\text{min}$, $n\in\{1,2, \cdots, N-1\}$.
We specify the adopted parameters of the considered photovoltaic RXs in Table~\ref{Table:SimulationSetup}.

Next, for the FSO system, we set the carrier wavelength of the information-carrying signal to $\lambda_0 = \SI{980}{\nano\meter}$.
Furthermore, we model the broadband optical energy-providing signal as a sum of $N$ FSO laser signals: 
\begin{equation}
	p_\text{e}(\lambda) = \sum_{n=1}^N p_n \delta(\lambda - \lambda_n),
\end{equation}
where $p_n > 0$ and $\lambda_n$ are the power and wavelength of FSO energy-providing signal $n, n \in \{1,2,\cdots, N\}$, respectively.
Moreover, the channel between the optical energy source and the RX is modelled as $g(\lambda) = \sum_{n=1}^N g_n \delta(\lambda - \lambda_n)$, where $g_n > 0$ is the channel between FSO energy TX $n$ and the RX.
We note that multi-junction photovoltaic cells are not able to operate efficiently if the induced current of any p-n junction is low \cite{Luque2010, Mertens2014}.
Therefore, to enable efficient EH at the RX also when the ambient light intensity is low, i.e., $\mu_\text{a} \approx 0$, we set the carrier wavelength of FSO energy-providing signal $n$ to $\lambda_n = \frac{\lambda^n_\text{min} + \lambda^n_\text{max}}{2}, n \in\{1,2,\cdots, N\}$.
Finally, we set $p_n = \frac{p}{N}, \forall n$, where $p \in \mathbb{R}_{+}$ is the total power of the energy-providing FSO signal independent of the number of p-n junctions $N$, assume normalized channel gains $h = g_n = 1, \forall n,$ and set the AWGN variance to $\sigma^2 = \SI{-60}{\dBm}$.
The adopted simulation parameters are summarized in Tables~\ref{Table:SimulationSetupEEC} and \ref{Table:SimulationSetup}.

	\subsection{Validation of EH Models and Sensitivity of Optical RXs}
	\label{Section:ValidationEH}
	
\begin{figure*}[!t]
	\centering
	\subfigure[Sensitivity for single-junction RXs, $N=1$]{
		\includegraphics[draft=false, width=0.47\textwidth]{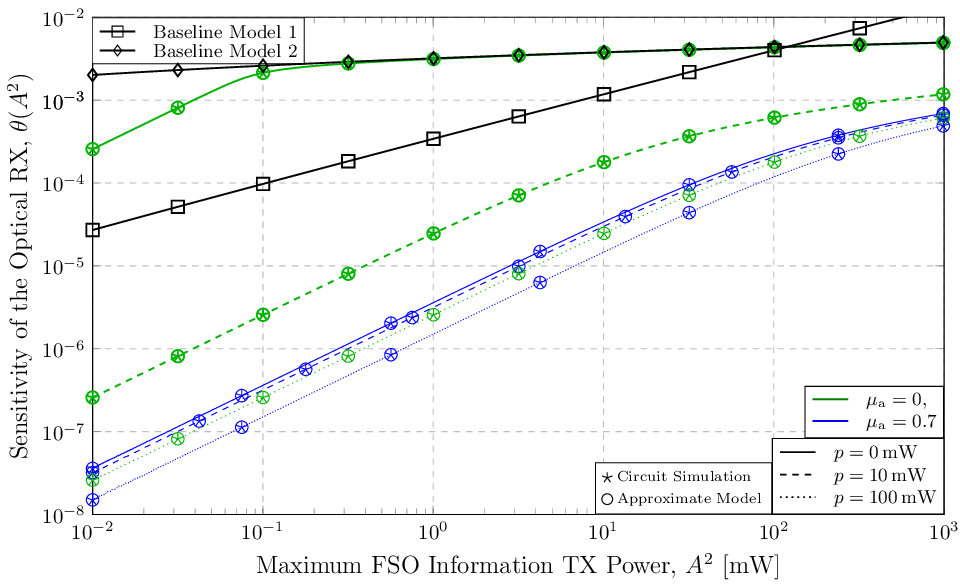} \label{Fig:SensitivitySJ}}
	\subfigure[Sensitivity for multi-junction RXs, $N=4$]{
		\includegraphics[draft=false, width=0.47\textwidth]{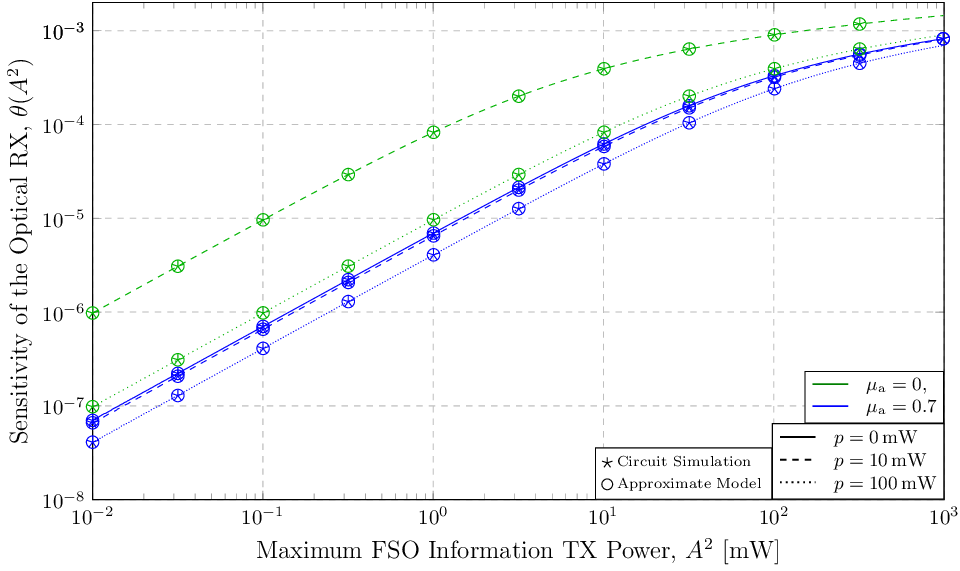}
		\label{Fig:SensitivityMJ}}
	\caption{Sensitivity of the optical RXs $\theta(A^2)$ for different ambient light intensities $\mu_a$ and FSO TX powers $A^2$ and $p$.}
	\label{Fig:ResultsSensitivity}\vspace*{-10pt}
\end{figure*}

\begin{figure*}[!t]
	\centering
	\subfigure[Cdfs for single-junction RXs, $N=1$]{
		\includegraphics[draft=false, width=0.47\textwidth]{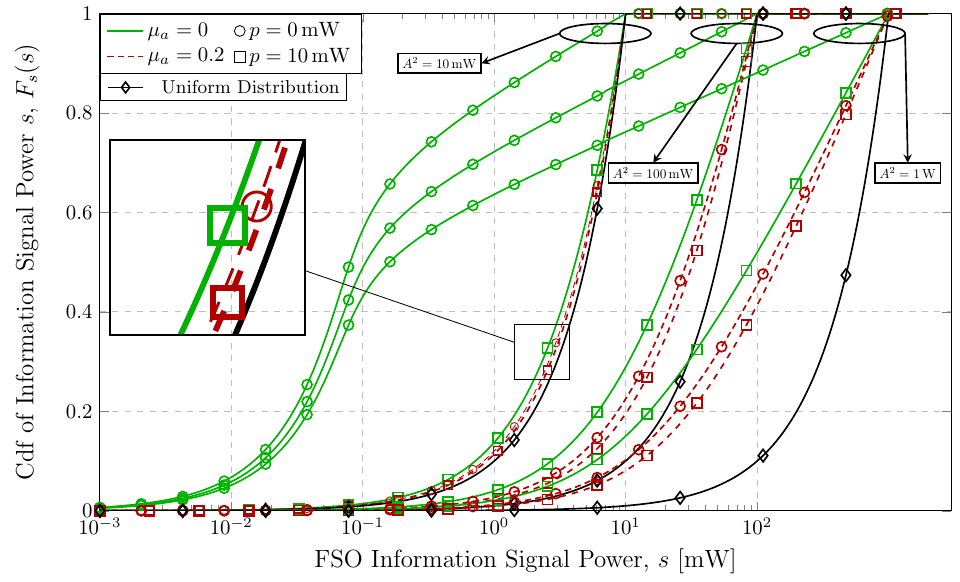} \label{Fig:Cdfs_SJ}}
	\subfigure[Cdfs for multi-junction RXs, $N=4$]{
		\includegraphics[draft=false, width=0.47\textwidth]{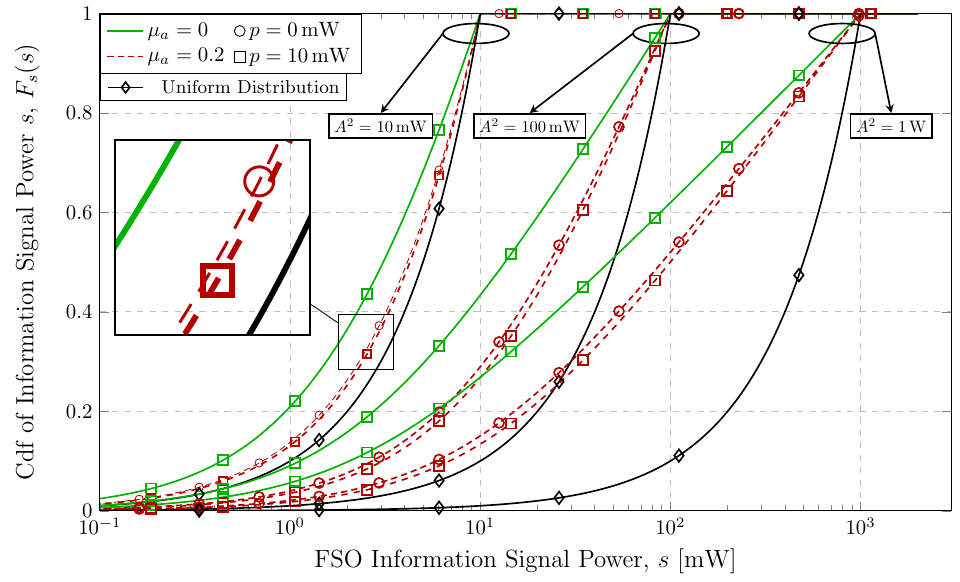}
		\label{Fig:Cdfs_MJ}}
	\caption{Proposed cdfs $F_{s}^*$ in (\ref{Eqn:CapacityAchievingDistributionS}) and cdfs of uniformly distributed information TX signal $s$ for different values of maximum FSO TX powers $A^2$ and $p$.}
	\label{Fig:ResultsCdfs}\vspace*{-10pt}
\end{figure*}

In this section, we validate the closed-form analytical EH models derived in Section~\ref{Section:EH} via circuit simulations.
To this end, in Fig.~\ref{Fig:ResultsValidationEH}, we plot the harvested powers $P_\text{harv}(j^\text{s}; \boldsymbol{j}^\text{a})$ obtained analytically from the expressions summarized in Table~\ref{Table:EHmodels} for single-junction ($N=1$) and multi-junction ($N=4$) photovoltaic RXs for different values of ambient light intensity $\mu_\text{a}$ and FSO transmit powers $s$ and $p$.
Furthermore, for different values of $s, p$, and $\mu_\text{a}$, we also simulate the RX EEC shown in Fig.~\ref{Fig:MjCell} with the circuit simulation tool Keysight ADS \cite{ADS2017} and show the obtained harvested powers as star markers in Fig.~\ref{Fig:ResultsValidationEH}.
To investigate the EH efficiency at the RX for low and high ambient light intensities, we adopt $\mu_\text{a} \in \{0, 0.7\}$.
For single-junction RXs, we adopt FSO energy signal powers $p = \{\SI{0}{\milli\watt}, \SI{10}{\milli\watt}, \SI{100}{\milli\watt}\}$.
Since a broadband high-power received optical signal is assumed for the derivation of the closed-form analytical EH model in Section~\ref{Section:ApproximateMJ}, for multi-junction RXs, we adopt $p = \{\SI{10}{\milli\watt}, \SI{100}{\milli\watt}\}$ and $p = \{\SI{0}{\milli\watt},  \SI{10}{\milli\watt}, \SI{100}{\milli\watt}\}$ for $\mu_\text{a} = 0$ and $\mu_\text{a} = 0.7$, respectively.
Additionally, as Baseline Model 1 and Baseline Model 2, we adopt the EH models derived in \cite{MaShuai2019} and \cite{Wang2015} that assume MPP tracking at the RX and a single-diode EEC of the photovoltaic RX, respectively.
In particular, since the impact of ambient light was neglected for the analysis of SLIPT systems in \cite{MaShuai2019}, we set the model parameters for Baseline Model 1 to match the harvested power $P_\text{harv}(j^\text{s}; \boldsymbol{j}^\text{a})$ in (\ref{Eqn:EHModelSingleJunction}) for $s = \SI{100}{\milli\watt}$, $p = \SI{0}{\milli\watt}$, and $j^\text{a}_1 = \SI{0}{\ampere}$.

First, in Fig.~\ref{Fig:ResultsValidationEH}, we observe that for both photovoltaic RX designs and for all considered values of $\mu_\text{a}, p$, and $s$, the derived closed-form analytical EH models $P_\text{harv}(\cdot; \cdot)$ match the ADS circuit simulation results well.
Next, we note that as expected, for both RX designs, the harvested power at the RX grows with the ambient light intensity $\mu_\text{a}$ and FSO transmit signal powers $p$ and $s$.
Furthermore, we observe that due to a larger passband, the multi-junction RX is able to achieve significantly higher efficiencies for harvesting the received broadband optical signal.
Also, since the EH model in \cite{MaShuai2019} assumes a tunable EH load resistance $R_\text{L}$ to maximize the EH efficiency for a given FSO transmit power $p$, we observe in Fig.~\ref{Fig:ResultsValidationEH} that for single-junction RXs and $\mu_\text{a}= 0$, Baseline Model 1 is not able to capture the non-linearities of the photovoltaic RX since the parameters of the EH load are fixed.
Finally, for the EH model in \cite{Wang2015}, the EEC comprises only one diode to model the diffusion current of the p-n junction at the RX.
Thus, Baseline Model 2 can not accurately capture the RX non-linearities when the received FSO power is low and the harvested power is determined by the recombination of particles in the depletion region of the junction \cite{Luque2010}.

As we can observe from Proposition~\ref{Theorem:Capacity}, for a given maximum FSO transmit power $A^2$, the performance of information decoding at the optical RX, i.e., the achievable information rate $\bar{R}(\cdot)$ in (\ref{Eqn:RateBound}) and the bit-error probability $P_\text{e}(\cdot)$ in (\ref{Eqn:BER}), are not determined by the instantaneous harvested power at the optical RX, but the sensitivity value $\theta(A^2)$.
Therefore, to investigate the sensitivity of the considered optical photovoltaic RXs to the received information signal, we calculate and plot in Fig.~\ref{Fig:ResultsSensitivity} the values $\theta(A^2)$ for different ambient light intensities $\mu_\text{a}$ and energy-providing FSO signal powers $p$ obtained with the closed-form EH models summarized in Table~\ref{Table:EHmodels}.
Moreover, for the results in Figs.~\ref{Fig:SensitivitySJ} and \ref{Fig:SensitivityMJ}, we assume that the photovoltaic RX is equipped with $N = 1$ and $N = 4$ p-n junctions, respectively.
As in Fig.~\ref{Fig:ResultsValidationEH}, for multi-junction cells with $N=4$, we adopt FSO signal powers $p = [\SI{10}{\milli\watt}, \SI{100}{\milli\watt}]$ and $p = [\SI{0}{\milli\watt}, \SI{10}{\milli\watt}, \SI{100}{\milli\watt}]$ when the ambient light intensity is $\mu_\text{a} = 0$ and $\mu_\text{a} > 0$, respectively.
Finally, similar to Fig.~\ref{Fig:ResultsValidationEH}, we additionally calculate and plot in Fig.~\ref{Fig:ResultsSensitivity} the sensitivities of information decoding $\theta(A^2)$ using Keysight ADS circuit simulations \cite{ADS2017}.

First, we observe in Fig.~\ref{Fig:ResultsSensitivity} that in general, a higher maximum FSO transmit power $A^2$ yields a higher value of the sensitivity function $\theta(A^2)$.
However, for large information FSO transmit powers, the optical RXs tend to saturate.
Furthermore, we observe that in general, higher values of $\mu_a $ and $p$ yield lower optical RX sensitivities $\theta(A^2)$, and thus, lower ambient light intensities and FSO energy signal powers may be preferable for efficient information transmission.
Thus, accurate EH modelling is important for the characterization of both the harvested power at the optical RX and the performance of information transmission in SLIPT systems.
Moreover, we note that in contrast to the baseline EH models, the derived EH models are able to accurately capture not only non-linear, but also saturation effects of the optical photovoltaic RXs.
In the following, we investigate the achievable rates and bit-error probabilities at the optical RXs in detail.

	\subsection{Optimal Distributions, Achievable Rates, and Bit-Error Rates}
	\label{Section:PerformanceId}
	
\begin{figure*}[!t]
	\centering
	\subfigure[Achievable rates for $p = \SI{0}{\milli\watt}$]{
		\includegraphics[draft=false, width=0.48\textwidth]{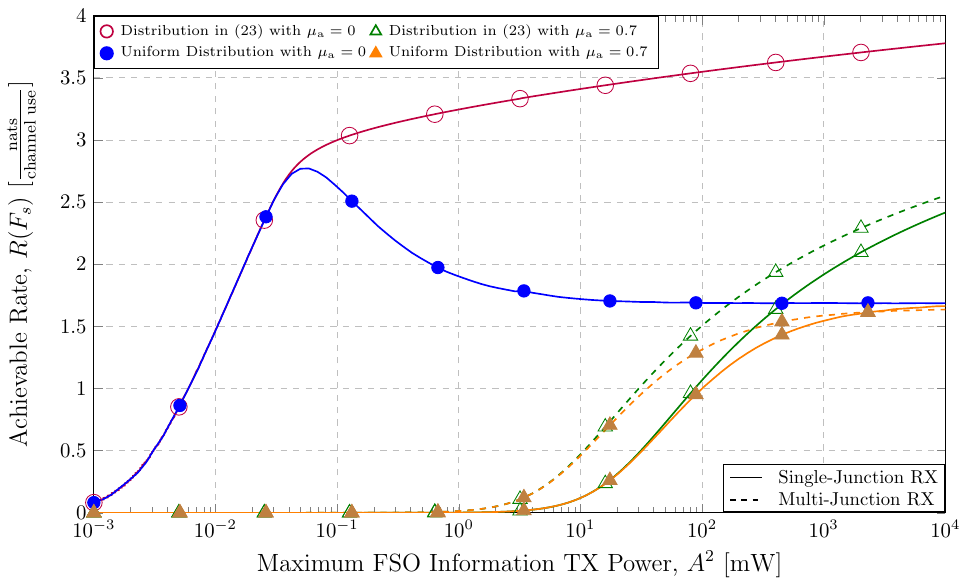}
		\label{Fig:ResultsAchievableRate_s0}}%
	\subfigure[Achievable rates for $p = \SI{100}{\milli\watt}$]{
		\includegraphics[draft=false, width=0.48\textwidth]{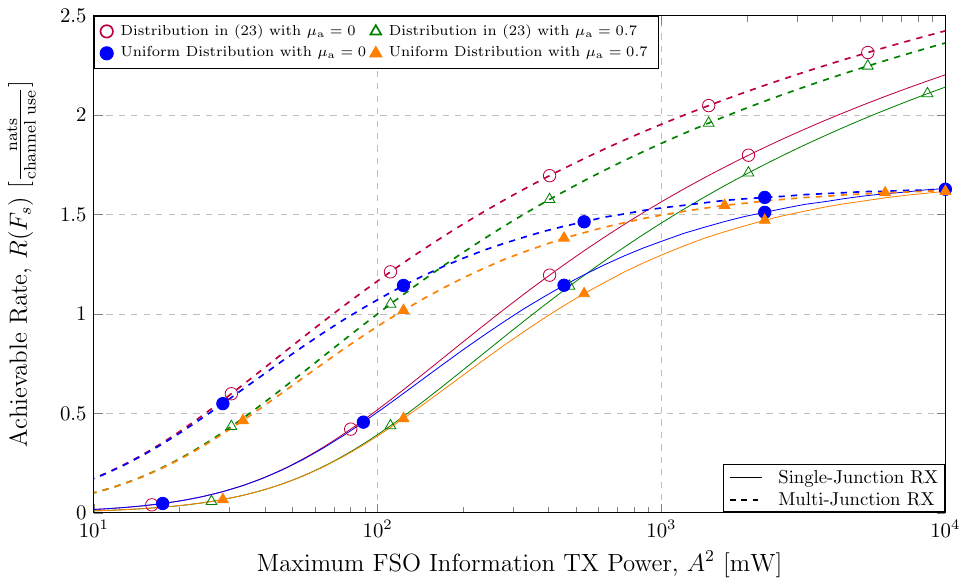} \label{Fig:ResultsAchievableRate_s100}}
	\caption{Achievable information rates $\bar{R}(F_{s})$ for different cdfs $F_{s}$ and different maximum FSO TX signal powers $A^2$ and $p$.}
	\label{Fig:ResultsAchievableRate}\vspace*{-12pt}
\end{figure*}

\begin{figure}[!t]
	\centering
	\includegraphics[draft=false, width=0.48\textwidth]{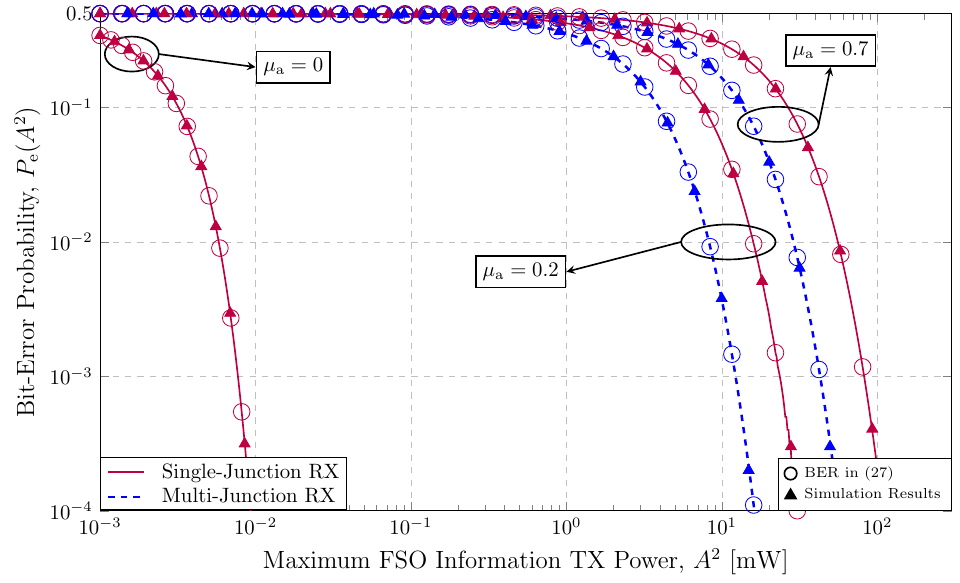}\vspace*{-7pt}
	\caption{Bit-error rates $P_\text{e}$ in (\ref{Eqn:BER}) for different RX designs, intensities $\mu_\text{a}$, and FSO signal power $A^2$.}
	\label{Fig:ResultsBERs}\vspace*{-15pt}
\end{figure}

In this section, we investigate the performance of information reception at non-linear single- and multi-junction photovoltaic RXs.
First, in Fig.~\ref{Fig:ResultsCdfs}, we plot the optimal cdfs $F_{s}^*$ given in (\ref{Eqn:CapacityAchievingDistributionS}), which maximize the achievable information rate $R(F_{s})$ in (\ref{Eqn:RateBoundGeneral}) for different ambient light intensities $\mu_\text{a}$ and maximum powers $p$ and $A^2$ of the FSO energy and information signals, respectively.
In particular, for the cdfs in Figs.~\ref{Fig:Cdfs_SJ} and \ref{Fig:Cdfs_MJ}, we assume that $N=1$ and $N=4$ p-n junctions are adopted at the photovoltaic RX, respectively.
Furthermore, as a baseline scheme, we also plot in Fig.~\ref{Fig:ResultsCdfs} the cdfs that correspond to a uniformly distributed information FSO signal $s \in [0, A^2]$, which maximize the achievable rate of FSO communication systems if, as in \cite{MaShuai2019, Li2017, Pan2019, Diamantoulakis2018, Tran2019, Sepehrvand2021, Papanikolaou2021, Zhang2018, Makki2018, Wang2015, Fakidis2020}, the non-linearity of the photovoltaic RX with respect to the received information signal is neglected.
We observe in Fig.~\ref{Fig:ResultsCdfs} that for both considered RX designs and for all considered values of $\mu_\text{a}, p$, and $A^2$, the proposed cdfs $F_{s}^*$ maximizing the achievable rate of SLIPT systems differ substantially from the cdf for a uniformly distributed information FSO signal $s$ at the TX.
Moreover, we note that the baseline uniform distribution favors higher FSO information signal powers $s$, and thus, may require a significantly higher average power of the transmit information signal than the proposed distribution in (\ref{Eqn:CapacityAchievingDistributionS}).
Thus, the accurate modelling of the RX non-linearities is important for power efficient SLIPT system design.

Next, in Fig.~\ref{Fig:ResultsAchievableRate}, for both considered RX designs and for different ambient light intensities $\mu_\text{a}$ and maximum TX powers $A^2$, we plot the maximum achievable rates in (\ref{Eqn:RateBound}) obtained for the proposed cdf $F_{s}^*$ and the achievable rates in (\ref{Eqn:RateBoundGeneral}) when a uniformly distributed signal $s$ is adopted at the information TX.
For the results in Figs.~\ref{Fig:ResultsAchievableRate_s0} and \ref{Fig:ResultsAchievableRate_s100}, we adopt energy signal powers $p = \SI{0}{\milli\watt}$ and $p = \SI{100}{\milli\watt}$, respectively.
Moreover, for the considered RX designs, in Fig.~\ref{Fig:ResultsBERs}, we also plot the bit-error rates for maximum likelihood information detection at the RX obtained with Proposition~\ref{Theorem:BER}.
Since higher ambient light intensities $\mu_\text{a}$ are equivalent to an increase of the FSO transmit power $p$, for the results in Fig.~\ref{Fig:ResultsBERs}, we adopt $p = 0$ and $\mu_\text{a} = [0, 0.2, 0.7]$ and $\mu_\text{a} = [0.2, 0.7]$ for the RX designs with $N = 1$ and $N = 4$, respectively.
Finally, to validate the bit-error probability in (\ref{Eqn:BER}), we additionally simulate the proposed maximum likelihood information detection scheme at the photovoltaic RX and show the obtained bit-error rates with filled markers in Fig.~\ref{Fig:ResultsBERs}.

First, we observe in Figs.~\ref{Fig:ResultsAchievableRate} and \ref{Fig:ResultsBERs} that, as expected, for both considered RX designs and all considered values of $p$ and $\mu_\text{a}$, the maximum achievable rate $R(F^*_{s})$ and the bit-error rate $P_\text{e}(A^2)$ increases and decreases, respectively, when the maximum transmit information signal power $A^2$ grows.
Furthermore, the achievable data rates obtained in Fig.~\ref{Fig:ResultsAchievableRate} with uniformly distributed TX signals $s$ are not only lower than those of the proposed scheme, but may even decrease with $A^2$.
We note that since the sensitivities in Fig.~\ref{Fig:ResultsSensitivity} saturate for large received signal powers, the higher ambient light intensity $\mu_\text{a} = 0.7$ yields lower achievable data rates in Fig.~\ref{Fig:ResultsAchievableRate} and higher bit-error probabilities in Fig.~\ref{Fig:ResultsBERs}.
Thus, there is a tradeoff between the achievable data rates and the harvested power in SLIPT systems, which will be studied in detail in the next section.
We note that since the efficiency of multi-junction photovoltaic cells is compromised when the received optical signal is not broadband, i.e., $p = 0$ and $\mu_\text{a} = 0$ \cite{Luque2010}, single-junction photovoltaic cells yield superior data rates and bit-error probabilities in this case.
However, in Figs.~\ref{Fig:ResultsValidationEH}, \ref{Fig:ResultsAchievableRate}, and \ref{Fig:ResultsBERs}, we observe that when $\mu_\text{a} > 0$, multi-junction photovoltaic cells are not saturated for the considered values of $p$ and $A^2$ due to the efficient allocation of the optical power to different p-n junctions.
Thus, surprisingly, multi-junction cells do not only provide higher EH efficiencies, but also yield higher achievable rates and lower bit-error probabilities than single-junction cells, and hence, are preferable for SLIPT.

	\subsection{Rate-Power Tradeoff}
	\label{Section:Tradeoff}
	
\begin{figure*}[!t]
	\centering
	\includegraphics[draft=false, width=0.7\textwidth]{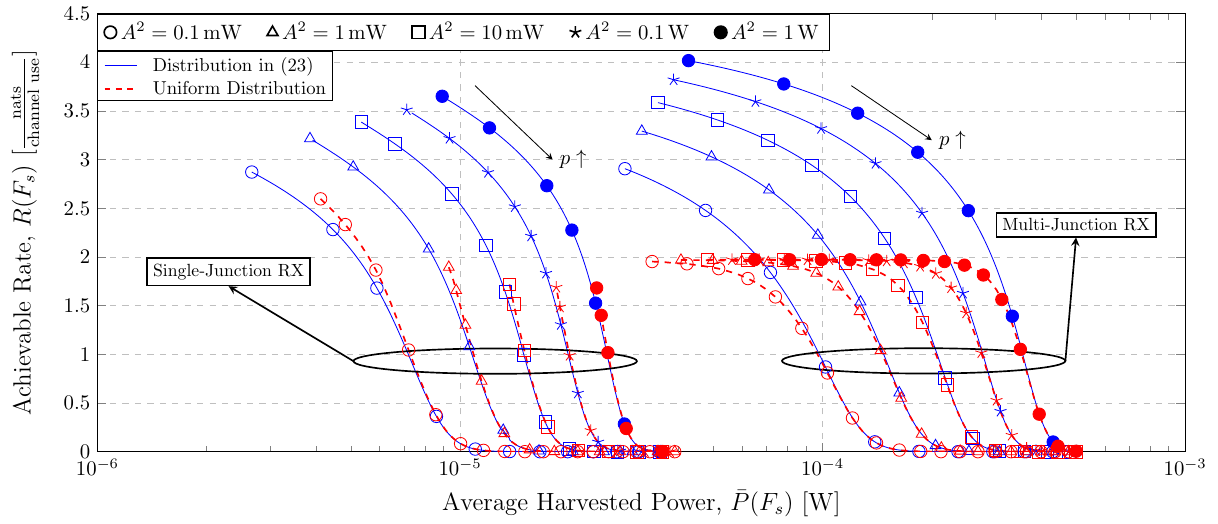}\vspace*{-7pt}
	\caption{Rate-power regions for different RX designs, TX signal pdfs $F_{s}$, and maximum FSO TX powers $A^2$.}
	\label{Fig:ResultsTradeoff}
	\vspace*{-10pt}
\end{figure*}

In the following, we investigate the tradeoff between the achievable information rate and the average harvested power for SLIPT systems with photovoltaic RXs.
To this end, for given maximum transmit information power $A^2$, by adjusting the powers $p\in\mathbb{R}_{+}$ of the energy FSO signals, we obtain and plot in Fig.~\ref{Fig:ResultsTradeoff} different values of the achievable information rate $R(F_{s})$ in (\ref{Eqn:RateBound}) and the average harvested power $\bar{P}(F_{s})$ for both single-junction and multi-junction RXs.
Since an increase in the ambient light intensity $\mu_\text{a}$ is equivalent to adopting a higher energy signal power $p$, the rate-power tradeoff is determined by $A^2$ and $N$, and thus, we set $\mu_\text{a} = 0$.
Similar to Fig.~\ref{Fig:ResultsAchievableRate}, for the results in Fig.~\ref{Fig:ResultsTradeoff}, we consider the proposed optimal cdf $F^*_{s}$, whose achievable rate and average harvested power are given by (\ref{Eqn:RateBound}) and (\ref{Eqn:ContAverageHarvestedPower}), respectively, and a baseline scheme, whose transmit information FSO signal $s$ is uniformly distributed.
For the baseline scheme, we calculate the achievable information rate with (\ref{Eqn:RateBoundGeneral}), whereas the average harvested power is given by $\bar{P}(F_{s}) = \int_0^1 P_\text{harv}(j^\text{s}; \boldsymbol{j}^\text{a}) \, \text{d} F_{s}(s)$. 

First, similar to the results in Figs.~\ref{Fig:ResultsValidationEH}, \ref{Fig:ResultsAchievableRate}, and \ref{Fig:ResultsBERs}, also in Fig.~\ref{Fig:ResultsTradeoff}, we observe that for both considered photovoltaic RX designs, higher information signal powers $A^2$ yield higher average harvested powers and higher achievable information rates.
Furthermore, we note that a larger number of p-n junctions $N$ at the RX leads to a higher average harvested power $\bar{P}(F_{s})$ and a higher achievable rate $R(F_{s})$.
Moreover, similar to Fig.~\ref{Fig:ResultsAchievableRate}, we observe also in Fig.~\ref{Fig:ResultsTradeoff} that the proposed cdf $F_{s}^*$ always shows significantly higher achievable information rates than the baseline scheme.
However, we note that since the proposed distribution in (\ref{Eqn:CapacityAchievingDistributionS}) maximizes the achievable information rate for given $A^2$ and $p$, it does not necessarily yield the maximum information rate for a given average harvested power at the optical RX.
Uniformly distributed transmit information signals may be able to provide higher achievable rates than the proposed distribution for a given $\bar{P}(\cdot)$, but this is achieved at the expense a significantly higher average information transmit signal power $s$, as shown in Fig.~\ref{Fig:ResultsCdfs}.
Furthermore, since higher energy signal powers $p$ yield higher average harvested powers and lower achievable rates, we observe that for SLIPT systems with non-linear photovoltaic RXs, there is a tradeoff between the data rate and harvested power that is characterized by the rate-power regions in Fig.~\ref{Fig:ResultsTradeoff}.
In particular, by adjusting the powers $p$ at the FSO energy TXs, it is possible to trade the information rate for a higher average harvested power at the optical RX.
Thus, we conclude that an accurate modelling of photovoltaic RXs is important not only for precise characterization of the harvested power, but also for reliable communication and efficient optical SLIPT system design.

\section{Conclusions}
	\label{Section:Conclusions}
	
In this work, we studied SLIPT systems, where the optical RX was equipped with a photovoltaic cell.
To enable efficient simultaneous EH and information reception, we proposed to employ multiple p-n junctions at the optical RX and considered the case where the RX is illuminated by ambient light, an information-carrying FSO signal, and an energy-providing optical signal.
The received optical signal at the photovoltaic RX was converted into an electrical signal, whose AC and DC components are separated and utilized for information decoding and EH, respectively.
We carefully modelled the current flow through the multi-junction photovoltaic RX and derived a novel accurate EH model, which characterizes the harvested power as a function of the received optical power.
Furthermore, since the accurate EH model did not lend itself to SLIPT system design, we also derived novel approximate closed-form analytical EH models for photovoltaic RXs with single and multiple p-n junctions, respectively.
Moreover, taking into account the non-linearities of photovoltaic RX circuits, we derived the distribution of the FSO transmit information signal that maximizes the achievable data rate, and for practical OOK modulated transmit signals, we determined the bit-error probability for maximum likelihood information detection at the RX.
We validated the proposed analytical EH models with circuit simulation results and demonstrated that multi-junction RXs not only have higher EH efficiencies, but also achieve higher data rates and lower bit-error probabilities compared to RXs equipped with a single p-n junction.
Furthermore, we showed that in contrast to the proposed EH model, two baseline models for a single-junction optical RX based on MPP tracking at the RX and a single-diode EEC are not able to fully capture the non-linear behavior of photovoltaic cells.
Moreover, we highlighted that the proposed transmit signal distribution yields significantly higher achievable rates than uniformly distributed information signals, which are optimal for linear information RXs.
Finally, we investigated the rate-power tradeoff for SLIPT systems and showed that by adjusting the energy-providing transmit signal power, the achievable data rate can be traded for a higher average harvested power at the optical RX.

\appendices
	\renewcommand{\thesection}{\Alph{section}}
	\renewcommand{\thesubsection}{\thesection.\arabic{subsection}}
	\renewcommand{\thesectiondis}[2]{\Alph{section}:}
	\renewcommand{\thesubsectiondis}{\thesection.\arabic{subsection}:}	
	\section{Proof of Proposition \ref{Theorem:Capacity}}
	\label{Appendix:ProofCapacity}
	The proof follows similar steps as that of \cite[Theorem 5]{Lapidoth2009}.
Exploiting the entropy power inequality, for a given $F_{s}$, we obtain the mutual information $I(y, s; F_{s})$ between the information TX and RX symbols $s$ and $y$ as follows \cite{Lapidoth2009}:
\begin{align}
	I(y , s ; F_{s} ) &=  u(y; F_{s}) - u(n) = u(x+n; F_{s}) - u(n) \nonumber\\
	& \geq \frac{1}{2} \ln \Big( 1+ \frac{e^{2u(x; F_{s})}}{ 2\pi e \sigma^2 } \Big) \triangleq R(F_{s}),
	\label{Eqn:RateBound2}
\end{align}
\noindent where $u(y; F_{s})$ and $u(n) = \frac{1}{2} \ln(2 \pi e \sigma^2)$ are the differential entropies of $y$ for given $F_{s}$ and AWGN, respectively.

We note that for a given FSO TX power $A^2$, the symbols at the RX output are bounded by $x[k] \in \mathcal{X} \triangleq [x_0, x_{A} ], \forall k$.
Therefore, the differential entropy $u(x)$, and hence, the achievable rate in (\ref{Eqn:RateBound2}) are maximized if the pdf $f^*_x$ of $x$ is uniform in $\text{dom} \{f^*_x\} = \mathcal{X}$ and is given by 
\begin{equation}
	f^*_x(x) = \frac{1}{ \theta(A^2) }.
	\label{Eqn:CapacityAchievingDistribution}
\end{equation}
Finally, the corresponding cdf of $s$ and the maximum achievable rate are given by (\ref{Eqn:CapacityAchievingDistributionS}) and (\ref{Eqn:RateBound}), respectively.
This concludes the proof.
	\section{Proof of Corollary \ref{Theorem:PowerCont}}
	\label{Appendix:ProofPowerCont}
	First, we note that for a given symbol $x$, the corresponding instantaneous harvested power can be obtained as $\rho(x) = x^2$, see (\ref{Eqn:CommChannel}).
Next, we recall that the capacity-achieving distribution of $x$ is uniform in $\mathcal{X}$, and for a given maximum information FSO power $A^2$, is given by (\ref{Eqn:CapacityAchievingDistribution}).
Thus, the corresponding pdf $f_\rho(\rho)$ of random variable $\rho$ is given by
\begin{align}
	f_\rho(\rho) &= 
	\begin{cases}
		\frac{1}{\theta(A^2)} \frac{1}{2 \sqrt{\rho} }, &\text{if} \; \rho \in [x_0^2, \, x_{A}^2 ], \\
		0, &\text{otherwise}.
	\end{cases} 
\end{align}
Finally, we obtain the average harvested power as follows:
\begin{align}
	\bar{P}_\text{\upshape harv}(A^2) &= \int_{\rho} \rho f_\rho(\rho) \text{d} \rho =
	\frac{1/3}{ \theta(A^2) }  \rho^{\frac{3}{2}} \bigg\lvert_{P_\text{harv} ( 0; \boldsymbol{j}^\text{a} )}^{P_\text{harv} ( j^\text{s}_\text{max}; \boldsymbol{j}^\text{a} )} \nonumber \\
	&= \frac{1}{3} ( x^2_{A} + x^2_0 + {x_{A}  x_0 } ),
\end{align}
\noindent where $j^\text{s}_\text{max} = h A^2 r_1(\lambda_0)$.
This concludes the proof.
	\section{Proof of Proposition \ref{Theorem:BER}}
	\label{Appendix:ProofBER}
		Since the transmit information symbols $s \in \{0, A^2\}$ are \gls*{iid}, due to the symmetry of the decision rule in (\ref{Eqn:MLDecoding}), the bit-error probability for maximum likelihood detection at the RX can be expressed as follows:
\begin{align}
	P_\text{\upshape e}(A^2) &=  \frac{1}{2}\Pr \{ y > \frac{x_0 + x_{A}}{2} \, \vert s = 0 \} \nonumber\\
	&\quad\quad\quad+ \frac{1}{2}\Pr \{ y < \frac{x_0 + x_{A}}{2} \, \vert s = A^2 \} \nonumber \\
	&=  \int_{\frac{ x_0 + x_{A} }{2}}^{+\infty} \frac{1}{\sqrt{2\pi}\sigma} \exp\big( - \frac{(x - x_0)^2}{2\sigma^2}\big) \text{d}x \nonumber \\
	&=  Q\bigg(\frac{\theta(A^2)}{ 2\sigma} \bigg).
	\label{Eqn:BER_Proof}
\end{align}
 This concludes the proof.

\bibliographystyle{IEEEtran}
\bibliography{WPT_Bibl.bib}

\begin{thebibliography}{10}
\providecommand{\url}[1]{#1}
\csname url@samestyle\endcsname
\providecommand{\newblock}{\relax}
\providecommand{\bibinfo}[2]{#2}
\providecommand{\BIBentrySTDinterwordspacing}{\spaceskip=0pt\relax}
\providecommand{\BIBentryALTinterwordstretchfactor}{4}
\providecommand{\BIBentryALTinterwordspacing}{\spaceskip=\fontdimen2\font plus
\BIBentryALTinterwordstretchfactor\fontdimen3\font minus
  \fontdimen4\font\relax}
\providecommand{\BIBforeignlanguage}[2]{{%
\expandafter\ifx\csname l@#1\endcsname\relax
\typeout{** WARNING: IEEEtran.bst: No hyphenation pattern has been}%
\typeout{** loaded for the language `#1'. Using the pattern for}%
\typeout{** the default language instead.}%
\else
\language=\csname l@#1\endcsname
\fi
#2}}
\providecommand{\BIBdecl}{\relax}
\BIBdecl

\bibitem{Shanin2023c}
N.~Shanin, H.~Ajam, V.~K. Papanikolaou, B.~Schmauss, L.~Cottatellucci, and
  R.~Schober, ``{EH} modelling and achievable rate for {FSO SWIPT} systems with
  non-linear photovoltaic receivers,'' \emph{arXiv preprint arXiv:2305.03524},
  2023.

\bibitem{Zhang2019}
Z.~Zhang, Y.~Xiao, Z.~Ma, M.~Xiao, Z.~Ding, X.~Lei, G.~K. Karagiannidis, and
  P.~Fan, ``{6G} wireless networks: Vision, requirements, architecture, and key
  technologies,'' \emph{IEEE Veh. Technol. Mag.}, vol.~14, no.~3, pp. 28--41,
  Sep. 2019.

\bibitem{Ajam2020}
H.~Ajam, M.~Najafi, V.~Jamali, and R.~Schober, ``Ergodic sum rate analysis of
  {UAV}-based relay networks with mixed {RF}-{FSO} channels,'' \emph{IEEE Open
  J. Commun. Soc.}, vol.~1, pp. 164--178, Jan. 2020.

\bibitem{Luque2010}
A.~Luque and S.~Hegedus, \emph{Handbook of Photovoltaic Science and
  Engineering}.\hskip 1em plus 0.5em minus 0.4em\relax John Wiley \& Sons, Ltd,
  2010.

\bibitem{Mertens2014}
K.~Mertens, \emph{Photovoltaics: Fundamentals, Technology, and Practice},
  2nd~ed.\hskip 1em plus 0.5em minus 0.4em\relax John Wiley \& Sons, 2014.

\bibitem{MaShuai2019}
S.~Ma, F.~Zhang, H.~Li, F.~Zhou, Y.~Wang, and S.~Li, ``Simultaneous lightwave
  information and power transfer in visible light communication systems,''
  \emph{{IEEE} Trans. Wirel. Commun.}, vol.~18, no.~12, pp. 5818--5830, Dec.
  2019.

\bibitem{Li2017}
Y.~Li, N.~Huang, J.~Wang, Z.~Yang, and W.~Xu, ``Sum rate maximization for {VLC}
  systems with simultaneous wireless information and power transfer,''
  \emph{{IEEE} Photon. Technol. Lett.}, vol.~29, no.~6, pp. 531--534, Mar.
  2017.

\bibitem{Pan2019}
G.~Pan, P.~D. Diamantoulakis, Z.~Ma, Z.~Ding, and G.~K. Karagiannidis,
  ``Simultaneous lightwave information and power transfer: Policies,
  techniques, and future directions,'' \emph{{IEEE} Access}, vol.~7, pp.
  28\,250--28\,257, Feb. 2019.

\bibitem{Diamantoulakis2018}
P.~D. Diamantoulakis, G.~K. Karagiannidis, and Z.~Ding, ``Simultaneous
  lightwave information and power transfer ({SLIPT}),'' \emph{{IEEE} Trans.
  Green Commun. and Netw.}, vol.~2, no.~3, pp. 764--773, Sep. 2018.

\bibitem{Tran2019}
H.-V. Tran, G.~Kaddoum, P.~D. Diamantoulakis, C.~Abou-Rjeily, and G.~K.
  Karagiannidis, ``Ultra-small cell networks with collaborative {RF} and
  lightwave power transfer,'' \emph{{IEEE} Trans. Commun.}, vol.~67, no.~9, pp.
  6243--6255, 2019.

\bibitem{Sepehrvand2021}
S.~Sepehrvand, L.~N. Theagarajan, and S.~Hranilovic, ``Rate-power trade-off in
  simultaneous lightwave information and power transfer systems,'' \emph{{IEEE}
  Commun. Lett.}, vol.~25, no.~4, pp. 1249--1253, Apr. 2021.

\bibitem{Papanikolaou2021}
V.~K. Papanikolaou, P.~D. Diamantoulakis, H.~G. Sandalidis, R.~K. Mallik, and
  G.~K. Karagiannidis, ``On the average harvested energy of directive lightwave
  power transfer ({DLPT}),'' \emph{{IEEE} Wirel. Commun. Lett.}, vol.~10,
  no.~7, pp. 1508--1512, Jul. 2021.

\bibitem{Zhang2018}
Q.~Zhang, W.~Fang, Q.~Liu, J.~Wu, P.~Xia, and L.~Yang, ``Distributed laser
  charging: A wireless power transfer approach,'' \emph{{IEEE} Internet Things
  J.}, vol.~5, no.~5, pp. 3853--3864, Oct. 2018.

\bibitem{Makki2018}
B.~Makki, T.~Svensson, K.~Buisman, J.~Perez, and M.-S. Alouini, ``Wireless
  energy and information transmission in {FSO} and {RF}-{FSO} links,''
  \emph{{IEEE} Wirel. Commun. Lett.}, vol.~7, no.~1, pp. 90--93, Feb. 2018.

\bibitem{Wang2015}
Z.~Wang, D.~Tsonev, S.~Videv, and H.~Haas, ``On the design of a solar-panel
  receiver for optical wireless communications with simultaneous energy
  harvesting,'' \emph{{IEEE} J. Sel. Areas Commun.}, vol.~33, no.~8, pp.
  1612--1623, May 2015.

\bibitem{Fakidis2020}
J.~Fakidis, H.~Helmers, and H.~Haas, ``Simultaneous wireless data and power
  transfer for a 1-{Gb/s} {GaAs} {VCSEL} and photovoltaic link,'' \emph{{IEEE}
  Photon. Technol. Lett.}, vol.~32, no.~19, pp. 1277--1280, Oct. 2020.

\bibitem{Lasersafety2014}
``{\itshape {Safety} of laser products - {Part} 1: Equipment Classification and
  Requirements},'' Int. Electrotechnical Commission, Geneva, Switzerland,
  Standard IEC 60825-1:2014, May 2014.

\bibitem{Horowitz1989}
P.~Horowitz and W.~Hill, \emph{The Art of Electronics}, 2nd~ed.\hskip 1em plus
  0.5em minus 0.4em\relax Cambridge University Press, 1989.

\bibitem{Dimroth2016}
F.~Dimroth, T.~N.~D. Tibbits, M.~Niemeyer, F.~Predan, P.~Beutel, C.~Karcher,
  E.~Oliva, G.~Siefer, D.~Lackner, P.~Fus-Kailuweit, A.~W. Bett, R.~Krause,
  C.~Drazek, E.~Guiot, J.~Wasselin, A.~Tauzin, and T.~Signamarcheix,
  ``Four-junction wafer-bonded concentrator solar cells,'' \emph{{IEEE} J.
  Photovolt.}, vol.~6, no.~1, pp. 343--349, Jan. 2016.

\bibitem{Geisz2020}
J.~F. Geisz, R.~M. France, K.~L. Schulte, M.~A. Steiner, A.~G. Norman, H.~L.
  Guthrey, M.~R. Young, T.~Song, and T.~Moriarty, ``Six-junction
  {III}{\textendash}{V} solar cells with 47.1{\%} conversion efficiency under
  143{\hspace{0.167em}}suns concentration,'' \emph{Nature Energy}, vol.~5,
  no.~4, pp. 326--335, Apr. 2020.

\bibitem{Tietze2012}
U.~Tietze and C.~Schenk, \emph{{Advanced Electronic Circuits}}.\hskip 1em plus
  0.5em minus 0.4em\relax Springer Science \& Business Media, 2012.

\bibitem{Lapidoth2009}
A.~Lapidoth, S.~M. Moser, and M.~A. Wigger, ``On the capacity of free-space
  optical intensity channels,'' \emph{{IEEE} Trans. Inf. Theory}, vol.~55,
  no.~10, pp. 4449--4461, Oct. 2009.

\bibitem{Tse2005}
D.~Tse and P.~Viswanath, \emph{Fundamentals of {Wireless Communication}}.\hskip
  1em plus 0.5em minus 0.4em\relax Cambridge University Press, 2005.

\bibitem{ADS2017}
{The Keysight Technologies, Inc.}, \emph{Electronic Design Automation (EDA)
  Software, Advanced Design System (ADS), Version 2017.}

\end{thebibliography}

\end{document}